\newtheorem{theorem}{Theorem}[section]
\newcommand\R{{\mathbb{R}}}
\newcommand\Z{{\mathcal{Z}}}
\newcommand\F{{\hat F}}
\newcommand\One{{\mathbf{1}}}
\newcommand\epsi{{\varepsilon, i}}
\newcommand\Err{{\rm Err}}
\newcommand\OOB{{\rm OOB}}
\newcommand\Errone{{\widehat{\Err}^{(1)}}}
\newcommand\ErrOOB{{\widehat{\Err}^{\OOB}}}
\newcommand\SE{{\rm SE}}
\newcommand\SEhat{{\widehat{\SE}}}
\newcommand\zalphatwo{{z_{1-\alpha/2}}}
\newcommand{\average}[2][n]{\ensuremath{\dfrac{1}{#1} \sum_{i = 1}^{#1} ~ {#2} }}
\renewcommand\l{\left}
\renewcommand\r{\right}
\newcommand{\tcr}[1]{\textcolor{red}{#1}}
\title{Confidence Intervals for the Generalisation Error of Random Forests}
\author{Rajanala Samyak\thanks{Dept. of Statistics, Stanford Univ.; samyak@stanford.edu},
	 Stephen Bates\thanks{Depts. of Statistics and EECS, Univ. of California, Berkeley; stephenbates@berkeley.edu}, 
	 Trevor Hastie\thanks{Depts. of Statistics and Biomedical Data Science, Stanford Univ.; hastie@stanford.edu}, 
	 and Robert Tibshirani\thanks{Depts. of Biomedical Data Science and Statistics,
    Stanford Univ.; tibs@stanford.edu}}
\date{\today}
\begin{document}
	\maketitle
	
	\begin{abstract}
		Out-of-bag error is commonly used as an estimate of generalisation error in ensemble-based learning models such as random forests.  
		We present confidence intervals for this quantity using the delta-method-after-bootstrap and the jackknife-after-bootstrap techniques.  
		{\em These methods do not require growing any additional trees.}
		We show that these new confidence intervals have improved coverage properties over the na\"ive confidence interval, in real and simulated examples.
	\end{abstract}
	
	
	\section {Introduction} \label{sec:intro}
	
	Bootstrap aggregation or bagging is a popular tool for reducing the variance in a learning model by averaging multiple predictions, each of which typically has low bias and high variance. 
	Random Forests \citep{breiman2001random} is a generalization of bagging that uses an ensemble of decision trees, where each tree is grown on a bootstrap sample drawn from the training data,  and only a random subset of the features  are considered at each tree split.
	For free, one also obtains a quantity called the \textit{out-of-bag error}, which provides an estimate of generalisation error.  The out-of-bag error is computed by aggregating the prediction error for observations that were not used in a particular tree.
	Here we extend this idea to obtain a standard error and confidence intervals for the generalisation (test)  error, that is, the error of the error.
	
	In the case of random forest regression, we first describe a na\"ive confidence interval that treats the errors on different observations as independent and examine the coverage properties of this interval. 
	It turns out that this approach tends to undercover in practice, as illustrated in Figure \ref{fig1_inadequacy}  (details are in the Figure caption).  This is a result of the fact that each observation is ``re-used'' -- that is, it plays the role of both a training and a test point.  The same phenomenon occurs for cross-validation as discusssed in \citet{bates2021cross}.
	
	\begin{figure}[H]
		\begin{center}
			\includegraphics[width=3in]{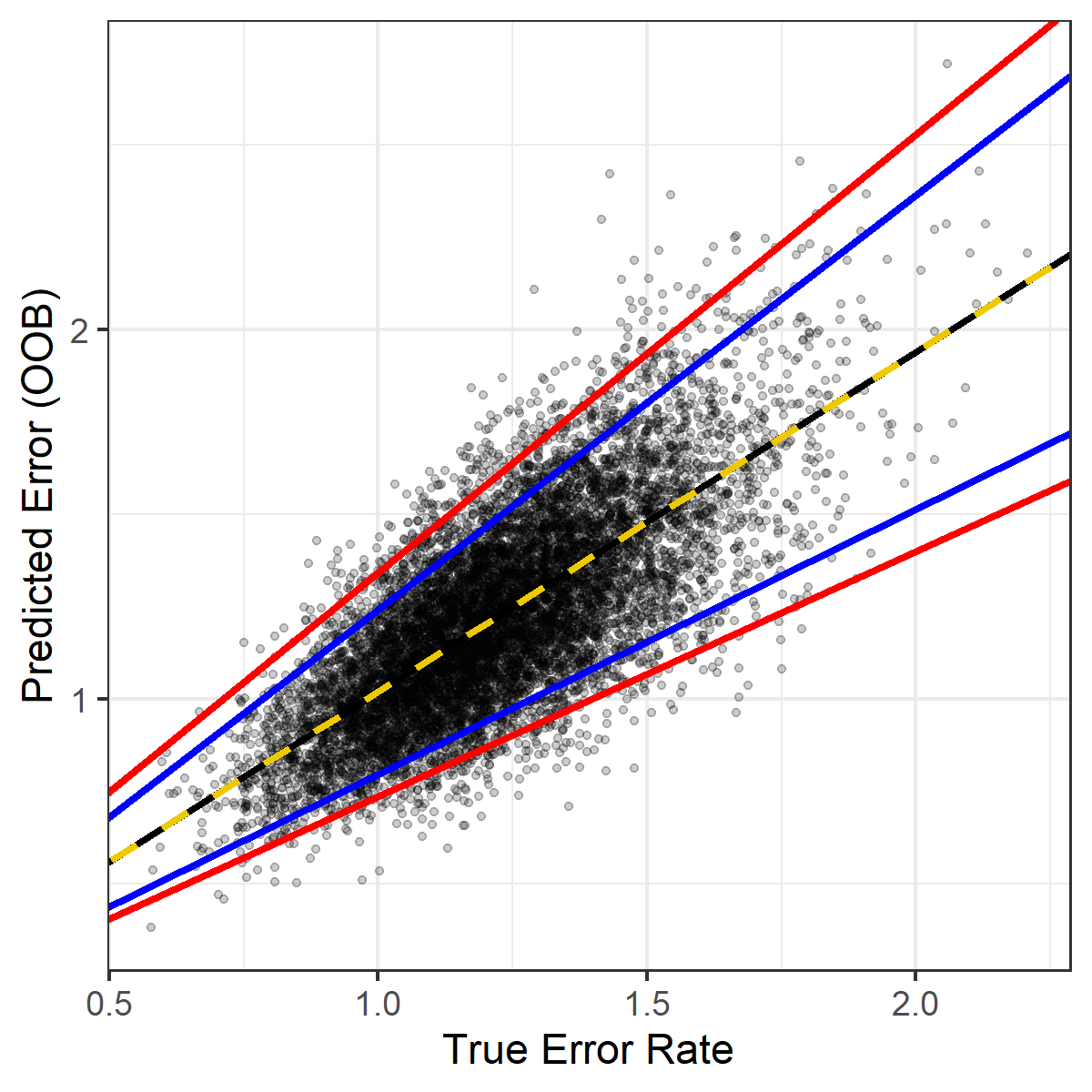}
			\caption{\em An example to illustrate the miscoverage of the na\"ive confidence interval for the true error rate. Each point represents a simulation with $n = 110$ and $p = 1000$.  The striped line is fit to the midpoints and the blue lines are fit to the upper and lower endpoints of the 90\% CIs using the na\"ive SE.  The red lines are obtained using 5\% and 95\% quantile regression.  Ideally, the blue lines would approximate the red lines.
				\label{fig1_inadequacy}}
		\end{center}
	\end{figure}
	
	To remedy this, we study and propose two new methods of computing a confidence interval, without growing any new trees.  We use the \textit{delta-method-after-bootstrap} and the \textit{jackknife-after-bootstrap} of \citet{efron1992jackknife} to obtain estimates of standard error for out-of-bag error, and show that the resultant confidence intervals have better coverage properties than the na\"ive interval.  
	The delta-method-after-bootstrap is also called the \textit{infinitesimal-jackknife-after-bootstrap} \citep{jaeckel1972infinitesimal} and uses influence functions to obtain standard errors for statistics that are smooth functions of the training data.	The jackknife-after-bootstrap uses a novel analogue of the jackknife that exploits the nature of random forests to deliver an estimate of  standard error without growing additional trees.   These two methods have also been used to obtain accuracy measures for other quantities, such as in \citet{efron1997improvements, wager2014confidence}.
		
	In our case of generalisation error in random forests, we obtain expressions for the standard error of out-of-bag error, and use them with a  normal approximation to produce confidence intervals.  
	Our estimator accounts for the fact that the errors on different points are correlated.	
	In Section \ref{sec:review}, we set up notation and review the random forest algorithm.  
	In Section \ref{sec:methods}, we introduce our estimators for the standard error in the regression case, and study the analogue of our result in the case of classification in Section \ref{sec:classification}.   
	In Section \ref{sec:data}, we present results on real and simulated examples.

	\section{Setup and Review} \label{sec:review}
	
	Before turning to our main method in the next section, we introduce our notation and review topics related to out-of-bag error.
	
	\subsection{Notation}
	We consider the standard setup of supervised learning with a real-valued	response.  
	Let $\Z = \{z_1, \dots, z_n\}$ denote the training data where $z_i = (x_i, y_i), i = 1, \dots n$ are drawn i.i.d. from a distribution $P$ on $\R^p \times \R$.
	Let $z_{n+1} = (x_{n+1}, y_{n+1})$ be another independent test point.  Using the training data, we are interested in learning the prediction function $r_\Z : \R^p \mapsto \R$, which minimises the out-of-sample loss $\ell\l(y_{n+1}, r_\Z(x_{n+1})\r)$.  
	
	We will first consider the squared error loss $\ell(a,b) = (a-b)^2$, but our results easily extend to other differentiable loss functions as well.  Note that $\ell\l(y_{n+1}, r_\Z(x_{n+1})\right)$ is random and unknown, so our target is one of two quantities:
	\begin{align}
		\Err_\Z &= E\l[\ell\l(y_{n+1}, r_\Z (x_{n+1})\right) | \Z \right] \\
		\Err &= E\l[\ell\l(y_{n+1}, r_\Z (x_{n+1})\right)\right] = E[\Err_\Z].
	\end{align}
	In practice, we usually consider $\Err_\Z$, which is the test error conditioned on the training data and also called the \textit{true error rate}, to be the estimand of interest.  $\Err$ is also called the \textit{expected true error} and is sometimes used as a quantity of interest.

	\subsection{Bagging \& Out-of-bag Errors} \label{ss:bagging}
	
	The bagging predictor uses an ensemble of $B$ decision trees, in which each decision tree $b = 1, \dots B$ is trained on a bootstrap sample drawn from the training set.  
	The individual trees may have high variance but have low bias.  
	The predictions from these trees are then averaged (\textit{bootstrap aggregation} or \textit{bagging}), resulting in an overall predictor with lower variance.  
	We will use $b$ to index both the decision tree as well as the bootstrap sample on which the decision tree was grown.
	Random Forests \citep{breiman2001random} is a widely used extension of the bagging predictor, where the decision trees are allowed to also depend on an extra source of randomness to encourage more diversity among the trees.  
	The most common implementation for this randomness involves selecting a subset of covariates to be used for creating the splits of the decision tree.
	
	Let $\hat{y}_i^{(b)}$ be the prediction for $x_i$ from tree $b$.  
	As each tree is trained on a sample of size $n$ drawn with replacement from the training set, on average a fraction $(1 - 1/n)^n \approxeq 1/e \approx .368$ of the trees do not use any given observation $i$.
	For each $i$ in the training set, let $o(i)$ be the set of decision trees which did not use $i$.    
	The predictions in $o(i)$ are aggregated to form $\hat{y}^\OOB_i$, the \textit{out-of-bag prediction} for observation $i$, and the \textit{out-of-bag error} is
	\begin{equation}
		\ErrOOB  = \average{\ell(y_i, \hat{y}^\OOB_i)}.
		\label{def:oob}
	\end{equation}
	In the case of regression, $\hat{y}^\OOB_i$ is the average of the predictions in $o(i)$, and in classification, $\hat{y}^\OOB_i$ is determined by majority vote.
		
	$\ErrOOB$ is commonly used as a point estimator of the test error $\Err_\Z$.  In the regression case with square loss, this becomes 
	
	\begin{equation}
		\ErrOOB = \average{ \l(y_i - \hat{y}^{OOB}_i \right)^2 }, \quad \text{where} ~ \hat{y}^{\OOB}_i = \dfrac{1}{|o(i)|}\sum_{b \in o(i)} {\hat{y}^{(b)}_i}.
	\end{equation}
	
	$\ErrOOB$ is instrinsically linked to $n$-fold or leave-one-out cross-validation (LOOCV), and one can show that in the limit as $B \rightarrow \infty$, $\ErrOOB$ is almost equal to the LOOCV error estimate, except for a leading factor of $1/n$ instead of $1/(n-1)$ (See Chap. 15 of \citet{ESL}). 
	A recent article \citep{bates2021cross} provides a careful analysis of cross-validation including understanding the estimand in CV and a nested cross-validation scheme for estimating prediction error.  
	
	So far, we have implicitly assumed that all observations have equal weight.  However, the idea of out-of-bag error can be extended to observations with unequal weights, and indeed, this will be essential for our standard error methods in Section \ref{sec:methods}.  
	
	Let $\F$ be the empirical distribution function of the training data $\Z$.  
	Let $S$ be a \textit{statistic}, by which we mean a real-valued functional that takes as its input a distribution on the training data points.  
	We choose $S$ in a natural way such that $S(\F)$ is the out-of-bag error.

	Let $W = (w_1, \dots, w_n): \sum{w_i} = 1$ be a vector of observation weights for the training data $\Z$, and let $\F_W$ be the corresponding distribution which assigns weight $w_i$ to the $i$-th observation.  Then we have that the empirical distribution $\F = \F_{W_0}$ where $W_0 = (1/n, \dots, 1/n)$.
	
	Let the $B$ trees in the random forest be indexed by $b$. 
	Each tree $b$ corresponds to a bootstrap sample drawn from the training data, so we use $b$ to index both the tree and the bootstrap sample.  
	Let $\widehat{y_j}^{(b)}$ be the prediction from tree $b$, which we will assume is deterministic given the bootstrap sample, 
	and let $I_j^{(b)} = 1$ if observation $j$ is not present in sample $b$, and $0$ if it is present.
	We define the statistic $S$ in the following way:  
	
	\begin{equation}
		S\l(\F_W\r) =
		\sum_{j=1}^{n}{
			w_j
			\l(y_j-\frac{\sum_{b}{
					{\widehat{y_j}}^{(b)}I_j^{(b)}g_W(b)}}
			{\sum_{b}\ I_j^{\left(b\right)}g_W(b)}\right)^2}
	\end{equation}
	where $g_W(b)$ is the probability of drawing the bootstrap sample $b$ under the weights $W$ for the training data.  Under $W_0$, we note that all samples $b$ have the constant likelihood $1/n^n$.
	
	Using this definition, we note that $S(\F) = \ErrOOB$, which is what we want.  This definition also ensures that when some observations have weight zero, we have the expected behaviour in terms of omitting the observation from the entire analysis.

	\subsection{Na\"ive Confidence Interval}	
	As a first attempt to obtain confidence intervals for test accuracy, we could na\"ively suppose that the out-of-bag errors for different observations are independent.  
	Let $q_i = \ell\l(y_i, \hat{y}^\OOB\r) = \l(y_i - \hat{y}^{\OOB}_i \r)^2$, then $\ErrOOB = \bar q = 1/n\sum_i {q_i}$.  Since we have assumed that the $q_i$ are independent, we can use the usual expression for standard error of the mean:
	
	\begin{equation}
		\label{def:naive}
		\SEhat_\text{na\"ive} = \dfrac{1}{\sqrt n}\sqrt{\average{\l(q_i - \bar q \right)^2}}.
	\end{equation}

	We then use the normal approximation to generate confidence intervals.  A $100(1-\alpha)\%$ confidence interval is given by $$\l(\ErrOOB - \zalphatwo \cdot \SEhat_\text{na\"ive}, \ErrOOB + \zalphatwo \cdot \SEhat_\text{na\"ive}\r)$$ where $\zalphatwo$ is the $(1-\alpha/2)^\text{th}$ quantile of the normal distribution.
	
	However Figure \ref{fig1_inadequacy} illustrates that this confidence interval does not have good coverage properties in practice.  In Section \ref{sec:methods}, we instead derive two alternative methods for estimating the standard error of $\ErrOOB$: one using the \textit{delta-method-after-bootstrap} following the example of \citet{efron1997improvements}, and another adapting the jackknife to the specific case of Random Forests, which we term the \textit{jackknife-after-bootstrap}
	\citep{efron1992jackknife}.
	
	\subsection{Related Work}
	
	More broadly, the out-of-bag estimator is one estimator for the general problem of estimating generalization accuracy~\citep[e.g.,][]{efron2021resampling}. There are three main approaches to this problem.  
	First, there are bootstrap-based methods~\citep{efron1983estimating, efron1986, efron1997improvements, efron1993introduction}. 
	Second, there is cross-validation~\citep{Allen74,geisser1975,stone1977} and data splitting. 
	The final main category of prediction error estimates are based on analytic adjustments such as Mallow's $C_p$ \citep{mallows1973comments}, AIC \citep{akaike1974aic}, BIC \citep{schwarz1978bic}, and general \emph{covariance penalties} \citep{stein1981estimating,efron2004estimation}. 
	Our present investigation should be viewed primarily as falling within the first category, but we note that out-of-bag accuracy for random forests is also related to leave-one-out cross-validation (See Chap. 15 of \citet{ESL}).
	
	One such related approach in the first category is the ``leave-one-out-bootstrap'' of \citet{efron1997improvements}.    
	This method is applicable to an arbitrary model fitting procedure, including Random Forests \citep{breiman2001random}.  
	This estimator, denoted $\Errone$, averages the error of models fit to bootstrap samples to derive an estimate of the generalisation error.
	\citet{efron1997improvements} also proposes a standard error for this estimate.
	However, computing $\Errone$ or  its standard error for Random Forests requires fitting a Random Forest to bootstrap samples from the training data, and since each fit itself involves resampling, we obtain a nested bootstrap regime.  
	
	We take inspiration from this approach to derive direct expressions for the standard error of $\ErrOOB$ that does not require a nested bootstrap.   	
	In this paper, we consider methods based on the infinitesimal jackknife and the jackknife for bagging \citep{efron1992jackknife, efron2014estimation}.  These have been studied in the context of model predictions for Random Forests by \citep{wager2014confidence}.  \citet{kim2020predictive} also introduces the \textit{jackknife+-after-bootstrap} for predictive intervals.  \citet{giordano2020swiss} presents theoretical results and error guarantees for the infinitesimal jackknife in general situations.  \citet{athey2019generalized} contains a literature review of other techniques related to Random Forests.

	\section{Methods} \label{sec:methods}

	We now turn to our proposed estimates of standard error.  In Section \ref{sec:delta}, we build on the delta-method-after-bootstrap and propose an estimator $\SEhat_{\rm del}$ for the standard error of $\ErrOOB$.  In Section \ref{sec:jack}, we adapt the jackknife estimator of standard error to the case of Random Forests and propose a jacknife-after-bootstrap estimator $\SEhat_{\rm JAB}$.

	\subsection{Delta-method-after-bootstrap} \label{sec:delta}
	
The delta-method-after-bootstrap, also known as the infinitesimal jackknife \citep{jaeckel1972infinitesimal, efron1992jackknife} can be used to derive estimates of accuracy for statistics which are ``smooth'' functions of $\F$.  

We will show that $\ErrOOB$ is also a ``smooth'' function of $\F$ and derive an expression for the standard error of $\ErrOOB$, following the outline of \citet{efron1995cross, efron1997improvements}.

What do we mean by a ``smooth'' function of $\F$?  
$\F$ is a distribution on the training data which puts an equal weight $1/n$ on each of the $n$ training data points.  
Let $\F_\epsi$ be the distribution obtained by perturbing the weight of observation $i$ by $\varepsilon$, i.e.

\begin{equation}
	\hat{F}_\epsi : {\rm Pr}
	\begin{cases}
		\dfrac{1-\varepsilon}{n} + \varepsilon &\text{on}~z_i \\
		\\
		\dfrac{1-\varepsilon}{n} &\text{on}~z_j~\text{for}~j\neq i.
	\end{cases}
\end{equation}

Then we say a symmetrically defined statistic $S(\F)$ is ``smooth'' if the derivatives $\partial S(\F_\epsi)/\partial\varepsilon$ exist at $\varepsilon = 0$.

Defining
\begin{equation}
	\hat{D}_i = \dfrac{1}{n} \dfrac{\partial S(\F_\epsi)}{\partial \varepsilon} |_0,
	\label{def:Di}
\end{equation}

the nonparametric delta method standard error estimate for $S(\hat{F})$ is 
\begin{equation}
	\label{def:delta}
	\widehat{\rm SE}_\text{del} (S) = \l[\sum_1^n{\hat{D}_i^2} \r]^{1/2}
\end{equation}
(see \cite{efron1992jackknife}, Section 5).  The vector $\hat{\bf D} = (\hat{D}_1, \dots, \hat{D}_n)$ is $1/n$ times the \textit{empirical influence function} of $S$.

We now present the main result for the delta-method-after-bootstrap:
\begin{theorem} \label{lem:Di}
	Let $B$ be the total number of distinct trees, which is $n^n$ in the case of bagging.  Let $n^{(b)}_i$ be the number of times observation $i$ occurs in sample $b$, and $I^{(b)}_j = \One{\l(n^{(b)}_j = 0\r)}$.  For $S(\F) = \ErrOOB$ with the square loss, the derivative (\ref{def:Di}) is 
	\begin{multline}
		{\hat{D}}_i=
		\frac{1}{n}
		\l\{
		\l( y_i - {\widehat{y_i}}^{\OOB} \r)^2 -
		\dfrac{1}{n}\sum_j\l(y_j-{\widehat{y_j}}^{\OOB}\right)^2
		\r\} - \\
		\frac{2e_n}{n}
		\sum_{j}{
			\l(y_j - {\widehat{y_j}}^{\OOB}\r) \cdot
			\l\{
			\frac{1}{B} \sum_{b} {
				\l(N_i^{\l(b\r)}-1\r) 
				I_j^{\l(b\r)} 
				\l( {\widehat{y_j}}^{\l(b\r)} - 
				{\widehat{y_j}}^{\OOB} \r)
			}
			\r\}},
	\end{multline}
	where $e_n = (1-1/n)^{-n}$.
\end{theorem}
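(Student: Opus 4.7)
The strategy is to compute $\partial S(\F_\epsi)/\partial\varepsilon|_0$ directly, since everything in sight is a finite sum of smooth rational functions of the weight vector. Write $S(\F_W) = \sum_j w_j (y_j - \hat{y}_j^W)^2$ with $\hat{y}_j^W = A_j(W)/B_j(W)$, where $A_j(W) = \sum_b \hat{y}_j^{(b)} I_j^{(b)} g_W(b)$ and $B_j(W) = \sum_b I_j^{(b)} g_W(b)$. The product and chain rules split $\partial S/\partial\varepsilon|_0$ into a weight piece, from differentiating the outer $w_j$, and a prediction piece, from differentiating $\hat{y}_j^W$. The weight piece is immediate: $\partial w_j/\partial\varepsilon|_0 = \One(j=i) - 1/n$, and at $W_0$ we have $\hat{y}_j^W = \hat{y}_j^\OOB$, so this contribution is $(y_i - \hat{y}_i^\OOB)^2 - (1/n)\sum_j (y_j - \hat{y}_j^\OOB)^2$, which after the outer $1/n$ in the definition of $\hat D_i$ produces the first curly brace of the theorem.

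For the prediction piece, the crux is differentiating $g_W(b) = \prod_j w_j^{N_j^{(b)}}$. A logarithmic derivative at $W_0$ gives
\begin{equation*}
\frac{\partial \log g_W(b)}{\partial \varepsilon}\bigg|_0 = n\sum_j N_j^{(b)}\l(\One(j=i) - \frac{1}{n}\r) = n\l(N_i^{(b)} - 1\r),
\end{equation*}
so $\partial g_W(b)/\partial \varepsilon|_0 = (n/B)(N_i^{(b)} - 1)$ with $B = n^n$. Since $B_j(W_0) = (1-1/n)^n = 1/e_n$ and $A_j(W_0)/B_j(W_0) = \hat{y}_j^\OOB$, the quotient rule yields
\begin{equation*}
\frac{\partial \hat{y}_j^W}{\partial \varepsilon}\bigg|_0 = e_n\cdot\frac{n}{B}\sum_b I_j^{(b)}\l(N_i^{(b)} - 1\r)\l(\hat{y}_j^{(b)} - \hat{y}_j^\OOB\r).
\end{equation*}
Plugging this into $-2 w_j (y_j - \hat{y}_j^W) \cdot \partial \hat{y}_j^W/\partial \varepsilon|_0$, setting $w_j = 1/n$, summing over $j$, and then multiplying by the outer $1/n$ produces exactly the second term of the theorem, with the $n$ from the logarithmic derivative absorbed into the displayed prefactor $2e_n/n$.

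The main obstacle is bookkeeping rather than anything conceptual. One must correctly track the factor $e_n$ emerging from $B_j(W_0)^{-1}$, which depends on $b$ being indexed over the $n^n$ ordered bootstrap sequences as specified in Section \ref{ss:bagging}, and verify that the several $n$ and $1/n$ factors combine to give the displayed constants. No limit or uniformity arguments are needed, since $S(\F_W)$ is a finite rational function of $W$ and all manipulations are exact.
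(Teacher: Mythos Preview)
Your proposal is correct and follows essentially the same route as the paper: both split $\partial S(\F_\epsi)/\partial\varepsilon|_0$ via the product rule into a weight piece (term I) and a prediction piece (term II), compute $\partial g/\partial\varepsilon|_0 = n(N_i^{(b)}-1)g_0(b)$, and then apply the quotient rule together with $B_j(W_0)=e_n^{-1}$ to obtain the second term. The only cosmetic difference is that you obtain $\partial g/\partial\varepsilon|_0$ via a logarithmic derivative of $\prod_j w_j^{N_j^{(b)}}$, whereas the paper writes out the closed form $g_{\varepsilon,i}(b)=(1-\varepsilon)^n\bigl(1+\tfrac{n\varepsilon}{1-\varepsilon}\bigr)^{N_i^{(b)}}g_0(b)$ and differentiates directly; the computations and resulting constants are identical.
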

We defer the proof to the appendix.  

We can now evaluate $\SEhat_\text{del} = \sqrt{\sum_{i}\ D_i^2}$.  We compare this to the na\"ive SE of (\ref{def:naive}).  We see that 
\begin{equation}
	\SEhat_\text{del}=\frac{1}{n}\sqrt{{\sum_{i}\l(q_i-\bar{q}+C_i\r)}^2}.
\end{equation}
where $$ C_i=
-\frac{2e_n}{n}
\frac{1}{B}\sum_{b}
\l(N_i^{\l(b\r)}-1\r)
\l\{
\sum_{j}{
	I_j^{\l(b\r)} 
	\l(y_j - {\widehat{y_j}}^{\OOB}\r) 
	\l({\widehat{y_j}}^{\l(b\r)} - {\widehat{y_j}}^{\OOB}\r)
}
\r\}$$ is proportional to the bootstrap covariance between $N_i^{\l(b\r)}$ and the cross term
$\sum_{j}{
	I_j^{\l(b\r)}
	\l(y_j-{\widehat{y_j}}^{\OOB}\r)
	\l({\widehat{y_j}}^{\l(b\r)} - {\widehat{y_j}}^{\OOB}\r)
}$.  
The na\"ive SE results from taking $C_i = 0$.

In practice, the number of trees $B$ is far less than the total number of distinct trees, which is $n^n$ in the case of bagging and could depend on the exact sampling scheme of the Random Forest.  Hence, following \citet{efron1997improvements}, we replace the expected value in $(N_i^{(b)} - 1)$ by the sample average  $\l(N_i^{\l(b\r)}-{\bar{n}}_i\r)$ in the covariance expression to get
\begin{multline}
	{\hat{D}}_i = 
	\frac{1}{n}
	\l\{
	\l(y_i-{\widehat{y_i}}^{\OOB}\r)^2 -
	\dfrac{1}{n}\sum_j \l(y_j-{\widehat{y_j}}^{\OOB}\r)^2
	\r\}- \\
	\frac{2e_n}{n}\sum_{j}{
		\l(y_j-{\widehat{y_j}}^{\OOB}\r) \cdot 
		\l\{\frac{1}{B}\sum_{b}{
			\l(N_i^{\l(b\r)}-{\bar{n}}_i\r)
			I_j^{\l(b\r)}
			\l({\widehat{y_j}}^{\l(b\r)}-{\widehat{y_j}}^{\OOB}\r)
		}
		\r\}}.
\end{multline}

We note that typically  $\SEhat_\text{del} \geq \SEhat_\text{na\"ive}$ but it is possible  to have $\SEhat_\text{del} < \SEhat_\text{na\"ive}$ for certain regimes of $q_i$ and  $C_i$.  

For a conservative estimate, we can define
\begin{equation}
	\label{def:pmax}
	\SEhat_\text{del+}=\max\l\{\SEhat_\text{na\"ive}, \SEhat_\text{del}\r\}.
\end{equation}

The $100(1-\alpha)\%$ confidence interval is given by $$\l(\ErrOOB - \zalphatwo \cdot \SEhat_\text{del+}, \ErrOOB + \zalphatwo \cdot \SEhat_\text{del+}\r)$$ where $\zalphatwo$ is the $\alpha/2^\text{th}$ quantile of the normal distribution.

For an arbitrary differentiable loss function $\ell(a,b)$, we have the analogue of Theorem \ref{lem:Di}:

\begin{theorem} \label{lem:arb_loss}
	Using the notation of Theorem \ref{lem:Di}, For $S(\F) = \ErrOOB$, with arbitrary loss function $\ell$ which is continuous and differentiable almost everywhere, the derivative (\ref{def:Di}) is 
	\begin{multline}
		{\hat{D}}_i=
		\frac{1}{n}
		\l\{
		\ell\l( y_i , {\widehat{y_i}}^{\OOB} \r) -
		\dfrac{1}{n}\sum_j\ell\l(y_j, {\widehat{y_j}}^{\OOB}\right)
		\r\} + \\
		\frac{e_n}{n}
		\sum_{j}{
			\ell'\l(y_j, {\widehat{y_j}}^{\OOB}\r) \cdot
			\l\{
			\frac{1}{B} \sum_{b} {
				\l(N_i^{\l(b\r)}-1\r) 
				I_j^{\l(b\r)} 
				\l( {\widehat{y_j}}^{\l(b\r)} - 
				{\widehat{y_j}}^{\OOB} \r)
			}
			\r\}},
	\end{multline}
	where 
	$\ell'\l(a,b\r) = 
	\dfrac{\partial 
		\ell\l(a, b\r)}{
		\partial b}$.
\end{theorem}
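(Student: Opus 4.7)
My plan is to reuse the proof strategy of Theorem \ref{lem:Di} essentially verbatim, with a single chain-rule step that replaces the squared-loss-specific factor $-2(y_j-\widehat{y_j}^{\OOB})$ by the generic derivative $\ell'(y_j,\widehat{y_j}^{\OOB})$. Write the statistic as
\[
S(\F_W) \;=\; \sum_{j=1}^{n} w_j\,\ell\!\l(y_j,\tilde y_j(W)\r), \qquad \tilde y_j(W)\;=\;\frac{\sum_b \widehat{y_j}^{(b)} I_j^{(b)} g_W(b)}{\sum_b I_j^{(b)} g_W(b)},
\]
so that both the weights $w_j$ and the bootstrap-aggregated prediction $\tilde y_j$ depend on the perturbation parameter. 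Since $\tilde y_j(W_0)=\widehat{y_j}^{\OOB}$, applying the product and chain rules at $\varepsilon=0$ decomposes $\partial S(\F_\epsi)/\partial\varepsilon|_0$ into a ``weight-derivative'' piece and a ``prediction-derivative'' piece weighted by $\ell'(y_j,\widehat{y_j}^{\OOB})$.

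For the weight-derivative piece I would plug in $\partial w_i/\partial\varepsilon|_0=1-1/n$ and $\partial w_j/\partial\varepsilon|_0=-1/n$ for $j\neq i$; summed against $\ell(y_j,\widehat{y_j}^{\OOB})$ and multiplied by $1/n$, this yields exactly the first braced term of the theorem. For the prediction-derivative piece I would apply the quotient rule,
\[
\frac{\partial \tilde y_j}{\partial \varepsilon}\bigg|_0 \;=\; \frac{1}{D_j(W_0)}\sum_b \l(\widehat{y_j}^{(b)}-\widehat{y_j}^{\OOB}\r) I_j^{(b)}\,\frac{\partial g_W(b)}{\partial \varepsilon}\bigg|_0,
\]
with $D_j(W_0)=(1-1/n)^n=1/e_n$, and then reuse the multinomial log-derivative identity from the Theorem \ref{lem:Di} proof, namely $\partial g_W(b)/\partial\varepsilon|_0 = g_{W_0}(b)\cdot n(N_i^{(b)}-1)$. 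Multiplying by $\frac{1}{n}\,\ell'(y_j,\widehat{y_j}^{\OOB})$, summing over $j$, and collecting the factors $1/D_j(W_0)=e_n$ and $g_{W_0}(b)=1/B$ (with $B=n^n$) produces the second sum of the theorem with its $e_n/n$ prefactor.

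I do not anticipate a genuine computational obstacle, since the structural derivation is a carbon copy of the argument given in the appendix for Theorem \ref{lem:Di}. The only substantively new issue is regularity: the chain rule requires $\ell(y_j,\cdot)$ to be differentiable at the specific point $\widehat{y_j}^{\OOB}$ for every training index $j$, and the stated ``continuous and differentiable almost everywhere'' hypothesis secures this for generic training configurations. As a sanity check, specialising to $\ell(a,b)=(a-b)^2$ so that $\ell'(a,b)=-2(a-b)$ recovers Theorem \ref{lem:Di} exactly, confirming that no sign or factor-of-two error has crept in.
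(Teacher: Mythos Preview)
Your proposal is correct and follows exactly the paper's approach: the paper's own proof says only that it ``follows in exactly the same fashion as that of Theorem \ref{lem:Di}, replacing the square loss term $\l(y_i-{\widehat{y_i}}^{\OOB}\r)^2$ by $\ell\l(y_i, {\widehat{y_i}}^{\OOB}\r)$,'' which is precisely the chain-rule substitution you describe. Your write-up is in fact more detailed than the paper's, and your sanity check that $\ell'(a,b)=-2(a-b)$ recovers the $-2e_n/n$ prefactor of Theorem \ref{lem:Di} is exactly the right consistency check.
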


Some common cases of $\ell$ and $\ell'$ are listed in Table \ref{tab:ell}.

\begin{table}[H]
	\begin{center}
		\begin{tabular}	
			{||c|c|c||} 
			\hline
			Loss & $\ell\l(a,b\r)$ & $\ell'\l(a,b\r) = \partial \ell\l(a, b\r)/\partial b$ \\ 
			\hline\hline
			Square error ($l^2$) & $(a-b)^2$ & $-2(a-b)$ \\ 
			\hline
			Absolute error ($l^1$) & $|a-b|$ & $-\text{sign}(a-b)$ \\
			\hline
			\makecell{Binomial deviance  \\ ($a, b \in [0,1]$)} & \makecell{$- a\log(b) - (1-a)\log(1-b)$} & $- a/b + (1-a)/(1-b)$  \\ [1ex] 
			\hline
			
		\end{tabular}
	\end{center}
	\caption{\em Some common loss functions and relevant derivatives in the context of Theorem \ref{lem:arb_loss}.
		\label{tab:ell}}
\end{table}

	\subsection{Jackknife-after-Bootstrap} \label{sec:jack}

	The usual jackknife estimate of standard error of a statistic $T$  from $n$ observations is based on the  jackknife quantities
	\begin{equation*}
		T_{(1)}, \dots, T_{(n)}
	\end{equation*}
	where $T_{(i)}$ is the statistic $T$ computed with observation $i$ omitted.  The jackknife estimator for standard error is then given by $$\sqrt{\dfrac{n-1}{n} \sum_{i=1}^n {(T_{(i)} - T_{(\cdot)})^2}}$$ 
	where $T_{(\cdot)} = 1/n\sum_i{T_{(i)}}$.
	
	In the case of Random Forests where we have a fixed number of trees $B$, this means re-fitting an entire random forest with $B$ trees for each left-out observation.  While valid, this is computationally expensive.  For clarity, we call this the \textit{full jackknife}. 
	We instead define the \textit{jackknife-after-bootstrap} standard error $\SEhat_{\text{JAB}}$ which replaces $T_{(1)}, \dots, T_{(n)}$ by 	
	\begin{equation*}
		S(\F_{-1/(n-1), 1}), S(\F_{-1/(n-1), 2}), \dots, S(\F_{-1/(n-1), n}).
	\end{equation*}
	
	The estimator is thus given by
	\begin{equation*}
		\SEhat_{\rm JAB} = \sqrt{\dfrac{n-1}{n} \sum_{i=1}^n {(S(\F_{-1/(n-1), i}) - S(\F_{-1/(n-1), \cdot}))^2}},
	\end{equation*}
	where $S(\F_{-1/(n-1), \cdot}) = 1/n \sum_i S(\F_{-1/(n-1), i})$.

	We observe that $\F_{-1/(n-1), i}$ omits observation $i$ and places an equal weight $1/(n-1)$ on each of the other observations.  Note that the jackknife-after-bootstrap agrees with the full jackknife when all possible trees are in the Random Forest.  However, for fixed $B$, the jackknife-after-bootstrap re-uses the $B$ trees in the original fit whereas the full jackknife regrows $B$ trees $n$ times.  This means we can compute it efficiently with Random Forests.
	
	The $100(1-\alpha)\%$ confidence interval is given by 
	$$\l(\ErrOOB - \zalphatwo \cdot \SEhat_{\rm JAB}, \ErrOOB + \zalphatwo \cdot \SEhat_{\rm JAB}\r)$$
	 where $\zalphatwo$ is the $\alpha/2^\text{th}$ quantile of the normal distribution.
	
	We note that for each left-out observation in the Random Forest, there are on average only $.632 B$ trees in the Random Forest built on the dataset without the observation.  In particular, this suggests that if $B$ is chosen that the out-of-bag error is stabilised for the original forest, then we would need at least $B/.632$ trees if we would like to use $\SEhat_{\rm del}$ or $\SEhat_{\rm JAB}$.
	
	Both $\SEhat_{\rm del}$ and $\SEhat_{\rm JAB}$ can be computed directly from the output of popular packages for fitting Random Forests, such as \texttt{randomForest} and \texttt{ranger} in R, without any modifications to the underlying code.
	
	\subsection{Transformation of intervals} \label{ss:transform}
	
	In Section \ref{sec:data}, we see that often the confidence intervals are asymmetric, in the sense that the miscoverage on either side is not equal. This is partially explained by the fact that the square error causes a skew in the distribution of $\ErrOOB$.  To remedy this, instead of using normal confidence intervals on the original scale, we can consider transformed intervals.  Let $h:\R \rightarrow \R$ be a monotonically increasing function.  Then we can consider the intervals for the quantity $h\l(\ErrOOB\r)$:
	\begin{equation*}
		\l[h(\ErrOOB) - \zalphatwo\cdot h'(\ErrOOB) \cdot \SEhat, \quad
		   h(\ErrOOB) + \zalphatwo\cdot h'(\ErrOOB) \cdot \SEhat \r]
	\end{equation*} which can then be transformed back to the original scale.  We list the most common transformations and the corresponding intervals below.
	
	\begin{align*}
		{\rm Original}, h(x) = x&: 
		\l[\ErrOOB - \zalphatwo\cdot\SEhat, 
		 \ErrOOB + \zalphatwo\cdot\SEhat\r] \\
		{\rm Log}, h(x) = \log(x) &: 
		\l[\exp\l(\log\l(\ErrOOB\r) - \zalphatwo\dfrac{\SEhat}{\ErrOOB}\r), 
		 \exp\l(\log\l(\ErrOOB\r)+ \zalphatwo\dfrac{\SEhat}{\ErrOOB}\r)\r] \\
		 {\rm Square~Root}, h(x) = \sqrt{x} &: 
		 \l[\l(\sqrt{\ErrOOB} - \zalphatwo\dfrac{\SEhat}{2\sqrt{\ErrOOB}}\r)^2, 
		 \l(\sqrt{\ErrOOB} + \zalphatwo\dfrac{\SEhat}{2\sqrt{\ErrOOB}}\r)^2\r] \\
	\end{align*}

	We find the log transformation useful in creating intervals with more symmetric coverage than the original scale, and we present extended results in Appendix \ref{app:results-ext}.

	\section{Random Forest Classification} \label{sec:classification}

We now study the case of two-class classification.  As in the case of regression, we have the training set $\Z = \{z_1, \dots, z_n\}$ where $z_i = (x_i, y_i), i = 1, \dots n$, but now drawn i.i.d. from a distribution $P$ on $\R^p \times \{0, 1\}$.
Again, let $z_{n+1} = (x_{n+1}, y_{n+1})$ be another independent test point.  

In classification, we usually consider the misclassification loss $l(a,b) = \One{(a \neq b)}$. The out-of-bag error (\ref{def:oob}) becomes

\begin{equation}
	\ErrOOB = \average{ \One{(y_i \neq \hat{y}^{\OOB}_i )} },
\end{equation}

where $\hat{y}^{\OOB}_i$ is determined by majority vote from the predictions of the out-of-bag trees.  Treating each prediction $\hat{y}^{(b)}_i$ as a $0/1$ numeric indicator, we can rephrase this as $$\hat{y}^{\OOB}_i = \mathbf{1}\l(\dfrac{1}{|o(i)|}\sum_{b \in o(i)} {\hat{y}^{(b)}_i} > 1/2\r).$$
In other words, $\hat{y}^{\OOB}_i = 1$ if more than half of the out-of-bag trees predict class $1$. 

While $\ErrOOB$ is not necessarily a smooth function of $\hat{F}$ because of the discrete nature of the misclassification loss, we note that the classification error can be rewritten as the error under the squared loss, treating the binary $\{0,1\}$ response as continuous.  We can do this by exploiting the following fact about the misclassification and the square loss:

\begin{equation}
	\ell(a,b) = \One(a \neq b) = (a-b)^2 \qquad {\rm if} ~ a, b \in \{0,1\}.
\end{equation}

This leads us to the following relation:
\begin{equation}
	\ErrOOB= \average{ \One{(y_i \neq \hat{y}^{\OOB}_i )} } = \average{ (y_i - \hat{y}^{\OOB}_i )^2 }.
\end{equation}

This suggests the use of the standard error and the CI from regression in the classification case as well.  This is akin to phrasing the classification case as a $0/1$ regression problem, with the additional thresholding of $\hat{y}^{\OOB}_i$ to make it $0/1$ valued.  To be explicit, we use the following quantity as the delta method standard error estimate:
\begin{equation}
	\label{def:delta_class}
	\widehat{\rm SE}_\text{del} (S) = \l[\sum_1^n{\hat{D}_i^2} \r]^{1/2}
\end{equation}
where using the notation of Theorem \ref{lem:Di},
\begin{multline}
	{\hat{D}}_i=
	\frac{1}{n}
	\l\{
	\l( y_i - {\widehat{y_i}}^{\OOB} \r)^2 -
	\dfrac{1}{n}\sum_j\l(y_j-{\widehat{y_j}}^{\OOB}\right)^2
	\r\} - \\
	\frac{2e_n}{n}
	\sum_{j}{
		\l(y_j - {\widehat{y_j}}^{\OOB}\r) \cdot
		\l\{
		\frac{1}{B} \sum_{b} {
			\l(N_i^{\l(b\r)}-1\r) 
			I_j^{\l(b\r)} 
			\l( {\widehat{y_j}}^{\l(b\r)} - 
			\dfrac{1}{|o(i)|} \sum_{b' \in o(i)}{\hat{y}^{(b')}_i} \r)
		}
		\r\}}.
\end{multline}

	\section{Results} \label{sec:data}
	
	Here we present the results from simulation experiments as well as analysis of real data examples.  We are primarily interested in the coverage of confidence intervals for generalisation error using the two methods we proposed and compare them to the coverage of the na\"ive interval.
	The code to reproduce all the results is available on GitHub at \texttt{https://github.com/RSamyak/oobdelta\_results/}.
	
	\subsection{Simulated Examples: Regression} \label{ss:sim_reg}
	
	We will now explore the behaviour and properties of our methods in various simulation settings.  We first show an example visualisation of the confidence intervals in Figure \ref{fig:cis_visualisation}.  
	The black dots are the out-of-bag error rate $\ErrOOB$, and the red dots are the true error rate in each run of the simulation, which is evaluated on a very large test set.  Confidence intervals centred around $\ErrOOB$ are shown using black lines.  We highlight the intervals which do not cover the true error rate, which we call miscoverage.  
	We define the interval to have miscoverage of the high type if both endpoints of the interval are larger than the true value, and to have miscoverage of the low type if both endpoints are smaller than the true value.  We say we have coverage of the high type when we do not have miscoverage of the high type, and similarly for the low type.

	  \begin{figure}[H]
	  	\centering
	  	\includegraphics[height=3in]{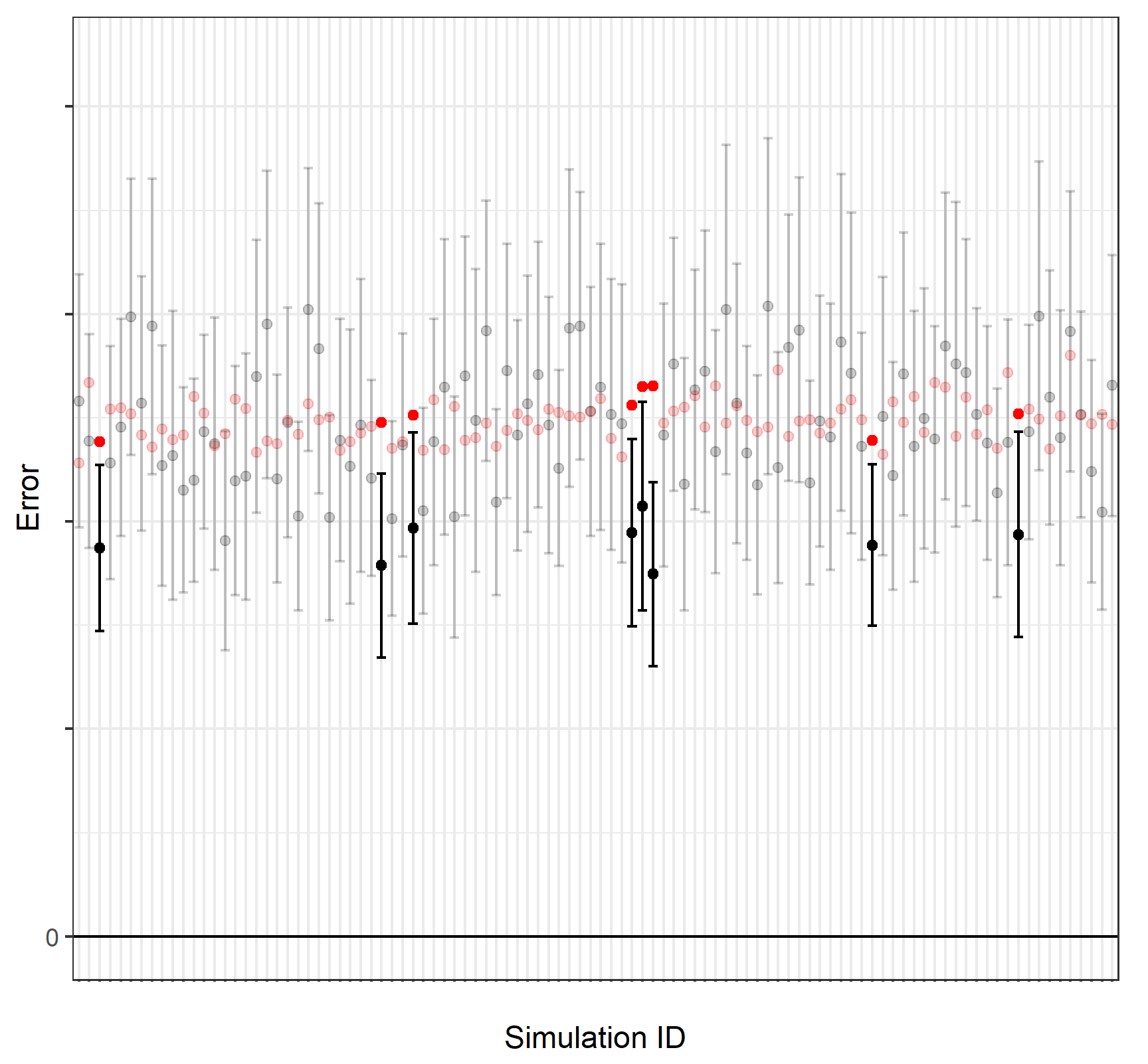}
	  	\caption{\em An example visualisation of the confidence intervals. In this illustration, we look at a sample of 90\% confidence intervals on the original scale using $\SEhat_{\rm JAB}$, with $n = 110, p = 1000, B = 3000, SNR = 2.0$.  The black dots are the out-of-bag error rate, the red dots are the true error rate.  Confidence intervals that do not cover the true error rate are highlighted in black.
	  		\label{fig:cis_visualisation}	}	
	  \end{figure}
	  
	In Table \ref{tab:results_reg}, we report the coverage of the na\"ive interval and the new CIs using the delta-method-after-bootstrap SE (\ref{def:pmax}) and the jackknife-after-bootstrap SE.  We fix the number of trees ($B = 3000$) across the different simulation settings and consider the nominal 90\% intervals.  We also report the average true error rate and the average OOB error estimate.  We observe that the delta method and the JAB intervals are quite comparable, and both have much better coverage properties than the na\"ive interval.  We see that these results are consistent across a range of different scales for the true error rate.

	\begin{table}[H]
		\begin{center}
			\begin{tabular}{|| *{11}{c}|| }
				\hline
				\multicolumn{3}{||c}{Setting}    
				& \multicolumn{2}{c}{Mean}
				& \multicolumn{3}{c}{Mean CI Width}
				& \multicolumn{3}{c||}{Miscoverage}             \\
				$n$ & $p$ & SNR 
				& $\ErrOOB$ & Truth
				& Na\"ive & Delta & JAB
				& {Na\"ive}
				& {Delta}
				& {JAB}                \\ [0.5ex]
			\hline
			\hline
			110 & 10 & 0 & 1.1 & 1.1 & .47 & .54 & .62 & \tcr{16.4\%} & \tcr{11.6\%} & 7.7\% \\ 
			\hline
			110 & 100 & 0 & 1.0 & 1.0 & .45 & .52 & .55 & \tcr{13.5\%} & 9.0\% & 6.7\% \\ 
			\hline
			110 & 1000 & 0 & 1.0 & 1.0 & .44 & .51 & .53 & \tcr{13.8\%} & 9.3\% & 7.7\% \\ 
			\hline
			110 & 10 & 2 & 76 & 75 & 33 & 38 & 39 & \tcr{13.0\%} & 8.9\% & 7.6\% \\ 
			\hline
			110 & 100 & 2 & 1147 & 1142 & 503 & 580 & 591 & \tcr{11.6\%} & 7.4\% & 6.6\% \\ 
			\hline
			110 & 1000 & 2 & 12421 & 12450 & 5441 & 6255 & 6417 & \tcr{14.4\%} & 8.8\% & 7.9\% \\ 
			\hline
			110 & 10 & 10 & 42 & 42 & 18 & 21 & 20 & \tcr{12.3\%} & 8.6\% & 9.9\% \\ 
			\hline
			110 & 100 & 10 & 804 & 803 & 352 & 406 & 406 & \tcr{12.9\%} & 8.8\% & 8.6\% \\ 
			\hline
			110 & 1000 & 10 & 9062 & 9079 & 3965 & 4556 & 4660 & \tcr{14.0\%} & 9.4\% & 8.2\% \\ 
			\hline
			\hline
			\end{tabular}

		\end{center}
	
	\caption{\em Simulation results for Regression.  In each setting, the coverage is computed over $R = 1000$ replicates.  The Random Forest algorithm is fixed to use $B = 3000$ trees.  CIs are generated with 10\% nominal miscoverage. \label{tab:results_reg}}
		
	\end{table}

	In Figure \ref{fig:reg_cov}, we visualise the coverage of the $100(1-\alpha)\%$ intervals on each side, with varying $\alpha$ across different $p$ for fixed $n$ and $B$.  We observe that the delta method and the JAB intervals consistently have better coverage than the na\"ive interval.  We also observe that the miscoverage of the high type is consistently less than the miscoverage of the low type.

	\begin{figure}[H]
		\centering
		\includegraphics[height=2.6in]{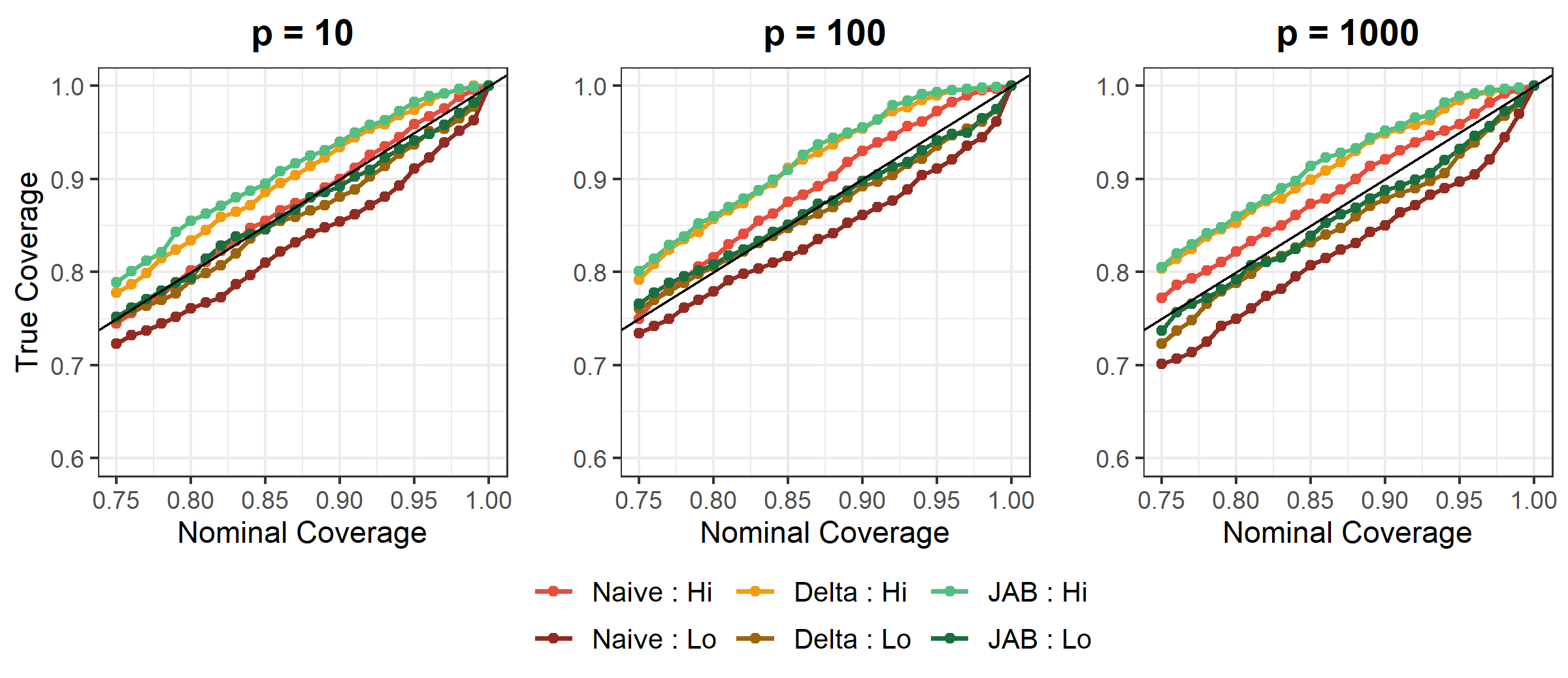}
		\caption{\em Coverage plots under different settings for regression. In each setting, the coverage is computed over $R = 1000$ replicates.  The Random Forest algorithm is fixed to use $B = 3000$ trees.  In this example $n = 110$ and $p$ varies, with the signal-to-noise ratio fixed to be $2.0$.  The true error rate is estimated using a test set of size $n_{\rm test} = 11k$.   CIs are generated with 5\% nominal miscoverage in each tail.
		\label{fig:reg_cov}	}
		
	\end{figure}

	In Figure \ref{fig:reg_lines}, we look at trends in coverage across varying parameters $p$, $B$, and SNR.  We observe that coverage is relatively stable across different $p$ and $SNR$, but we observe a downward trend as $B$ increases, resulting in slight miscoverage as $B$ takes very large values.

	\begin{figure}[H]
		\centering
		\includegraphics[height=2.6in]{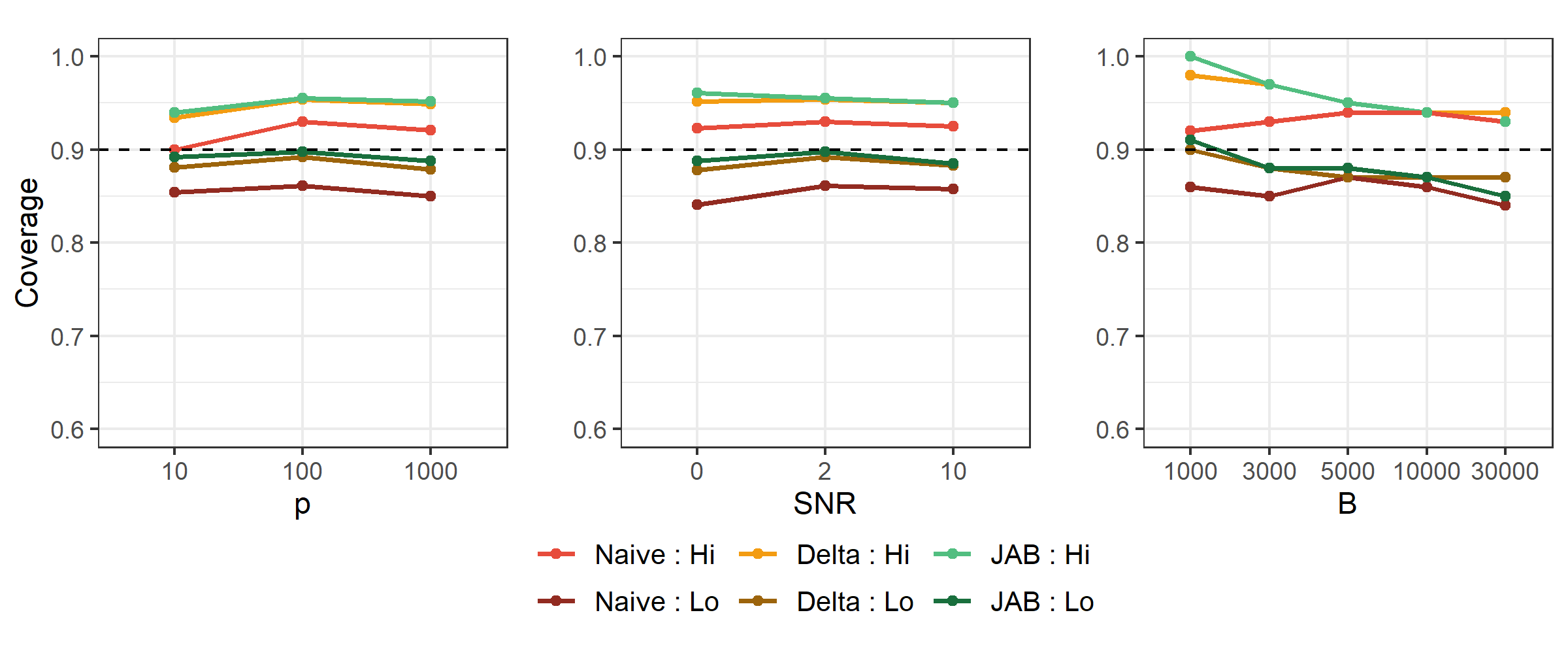}
		\caption{\em Trends across varying parameters $p$, $B$, and SNR for Regression. In each setting, $n = 110$ and the coverage is computed over $R = 1000$ replicates.  
			The true error rate is estimated using a test set of size $n_{\rm test} = 11k$.  Left to right: (1) $B = 3000$, SNR $ = 2.0$, and varying $p$; (2) $p = 100$, $B = 3000$, and varying SNR; (3) $p = 100$, SNR $ = 2.0$, and varying $B$.
			\label{fig:reg_lines}	}	
	\end{figure}
	
	In Figure \ref{fig:reg_sd_curve_B}, we explore the trend in the standard error estimates as $B$ increases in order to understand the trend in coverage observed in Figure \ref{fig:reg_lines}.  
	We notice that the JAB and the delta method estimates for standard error take longer to stabilise, than usual for the point predictions in Random Forests.  
	For small $B$, we observe that the standard error estimates are very large which leads to wide intervals and hence high coverage.
	This is an important observation as $B$ is a parameter in Random Forests that needs to be chosen.  We do not yet fully understand how to pick the right $B$ from first principles in order to obtain reliable standard error estimates.  However, in Section \ref{sec:jack} we show this must be larger than the $B$ required for stabilising the out-of-bag error.  For our experiments we picked $B = 3000$, as we found this to be a reasonable choice.
	
	\begin{figure}[H]
		\centering
		\includegraphics[width=6in]{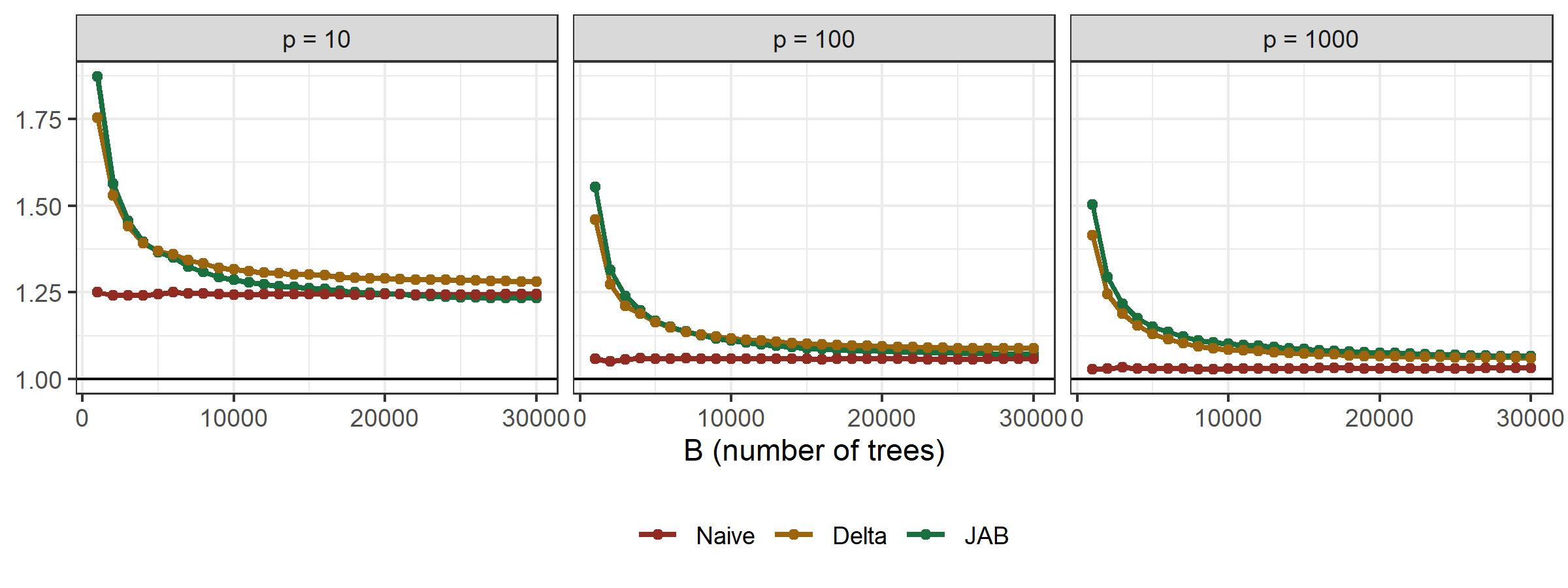}
		\caption{\em Trend in standard error estimates across varying $B$ for Regression, rescaled to show the ratio of the estimates to the empirical SD of $\ErrOOB$.  In each setting, $n = 110, SNR = 2.0$ and the curves are averaged over $R = 100$ replicates.
			\label{fig:reg_sd_curve_B}	}	
	\end{figure}
	
	\subsection{Real Data Examples: Regression} \label{ss:realdata}
	
	We use our method on various datasets obtained from the UCI Machine Learning Repository and show the results obtained in Table \ref{tab:realdata_reg}.  In each case, we randomly split the data into a training set (20\%) and a test set (80\%), and report the average standard error estimates.  The larger test set is essential for accurate estimation of the true error rate.  In Figure \ref{fig:realdata_reg_coverage}, we show coverage under repeated train-test splits of the Communities dataset.  We observe that both the delta method and the JAB intervals have comparable performance and are conservative in terms of coverage, whereas the na\"ive interval undercovers.
	
	\begin{center}
		\begin{table}[H]
			\centering
		\begin{tabular}{||p{3cm}|c|c|c|c|c|c||}
			\hline
			& & & & & & \\
			\textbf{Dataset} & 
			$\ErrOOB$
			& 
			True Error 
			& $\ErrOOB$
			& 
			$\widehat{\rm SE}_\text{na\"ive}$ 
			&  
			$\widehat{\rm SE}_\text{delta}$ 
			&  
			$\widehat{\rm SE}_\text{JAB}$ 

			\\
			& Mean & Mean  & SD & Mean & Mean & Mean\\
			\hline
			Communities \newline ($n = 1994, p = 99$) & .020 & .020 & .0019 & .0020 & .0030  & .0030\\
			Forest Fires \newline ($n = 517, p = 10$) & .4301 & .4556 & .4499  & .3108 & .3569  & .3456\\
			Boston Housing \newline ($n = 506, p = 13$) & 18.85 & 18.58  & 4.13 & 5.69 & 6.56  & 5.76\\
			Servo \newline ($n = 167, p = 2$) & 1.09 & 1.11 & 0.35  & 0.37 & 0.40 & 0.41\\
			\hline	
		\end{tabular}
	
		\caption{\em Results on datasets obtained from the UCI Machine Learning Repository.  Each column is computed over 200 different train-test splits, with $25\%$ in the training set and the rest in the test set.  Each time, we use $B = 3000$ trees in the Random Forest. \label{tab:realdata_reg}}
	\end{table}
	\end{center}

	\begin{figure}[H]
		\centering
		\includegraphics[height=3in]{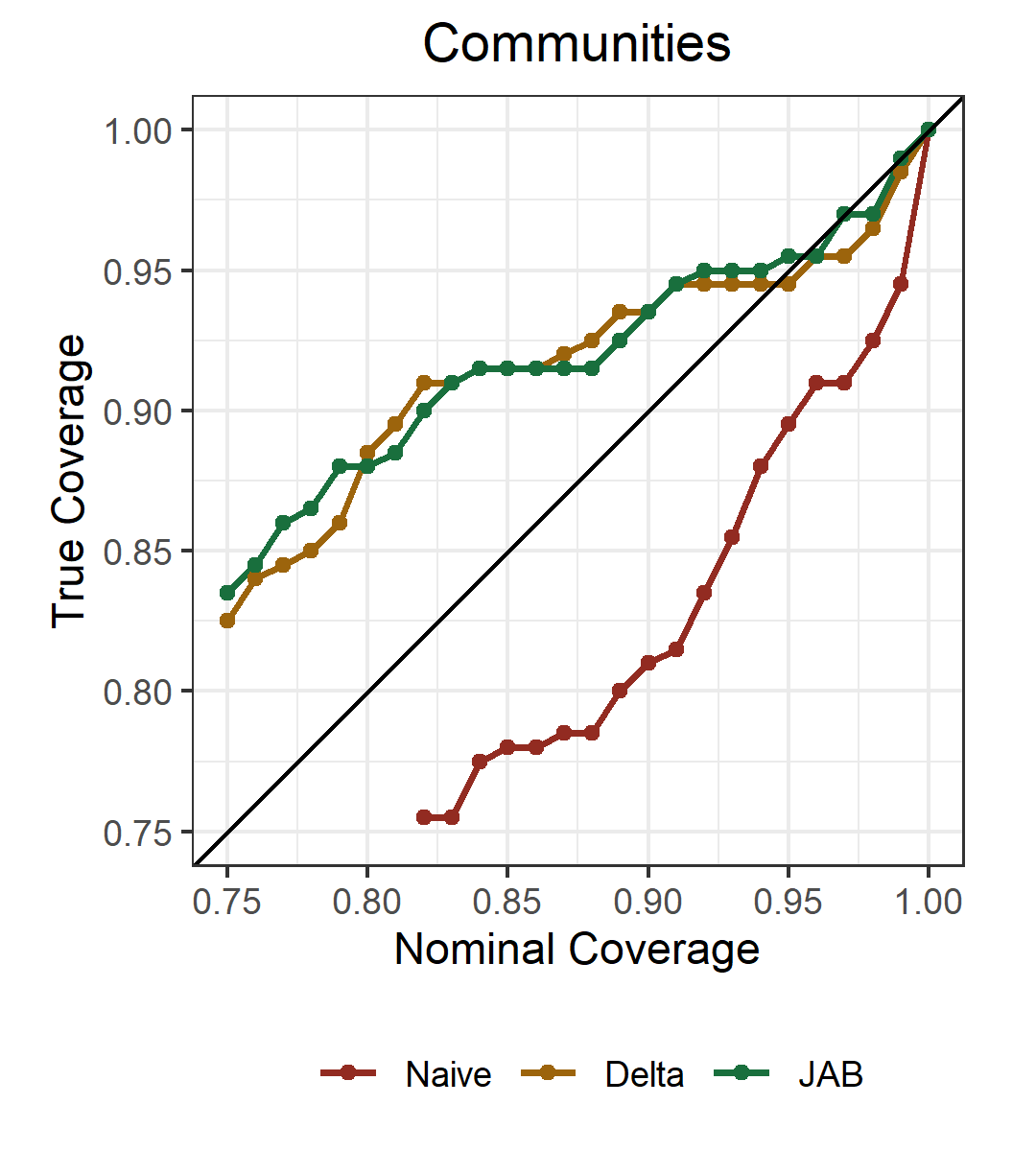}
		\caption{\em Communities dataset, $n = 1994, p = 99$.  Coverage of test error rate over $200$ different train-test splits, with 25\% data in the training set each time.  We fix $B = 3000$ in the Random Forest fit. 
		\label{fig:realdata_reg_coverage}}
	\end{figure}

	\subsection{Simulated Examples: Classification} \label{ss:sim_class}
	
	We now produce the analogous results for classification in Table \ref{tab:results_class}.  In each case, we report the coverage of the  confidence intervals formed using the different standard error estimates, where the nominal coverage is 90\%.  We also report the average true error rate and the average OOB error estimate.
	
\begin{table}[H]
	\begin{center}
		\begin{tabular}{|| *{11}{c}|| }
		\hline
		\multicolumn{3}{||c}{Setting}    
		& \multicolumn{2}{c}{Mean}
		& \multicolumn{3}{c}{Mean CI Width}
		& \multicolumn{3}{c||}{Miscoverage}             \\
		$n$ & $p$ & SNR 
		& $\ErrOOB$ & Truth
		& Na\"ive & Delta & JAB
		& {Na\"ive}
		& {Delta}
		& {JAB}                \\ [0.5ex]
	\hline
	\hline
	110 & 10 & 0 & 0.27 & 0.27 & .04 & .06 & .10 & \tcr{26.9\%} & \tcr{15.0\%} & 1.0\% \\ 
	\hline
	110 & 100 & 0 & 0.26 & 0.26 & .02 & .04 & .08 & \tcr{28.0\%} & 6.8\% & 0.0\% \\ 
	\hline
	110 & 1000 & 0 & 0.25 & 0.25 & .02 & .03 & .07 & \tcr{25.0\%} & 2.2\% & 0.2\% \\ 
	\hline
	110 & 10 & 2 & 0.19 & 0.19 & .05 & .06 & .09 & \tcr{17.3\%} & \tcr{10.0\%} & 1.2\% \\ 
	\hline
	110 & 100 & 2 & 0.24 & 0.24 & .02 & .04 & .08 & \tcr{25.3\%} & 5.8\% & 0.0\% \\ 
	\hline
	110 & 1000 & 2 & 0.25 & 0.25 & .02 & .03 & .07 & \tcr{27.2\%} & 1.8\% & 0.2\% \\ 
	\hline
	110 & 10 & 10 & 0.16 & 0.16 & .04 & .05 & .08 & \tcr{21.3\%} & \tcr{13.0\%} & 1.5\% \\ 
	\hline
	110 & 100 & 10 & 0.23 & 0.23 & .02 & .03 & .08 & \tcr{25.3\%} & 7.0\% & 0.1\% \\ 
	\hline
	110 & 1000 & 10 & 0.25 & 0.25 & .02 & .03 & .07 & \tcr{25.1\%} & 1.3\% & 0.0\% \\ 
	\hline
	\hline
		\end{tabular}
		
	\end{center}
	
	\caption{\em 
		Coverage under different settings for classification. In each setting, the coverage is computed over $R = 1000$ replicates.  The Random Forest algorithm is fixed to use $B = 3000$ trees.  In this example $n = 110$ and $p$ varies, with the signal-to-noise ratio fixed to be $2.0$.  The true error rate is estimated using a test set of size $n_{\rm test} = 11k$.  CIs are generated with 10\% nominal miscoverage.
	\label{tab:results_class}}
	
\end{table}

	\begin{figure}[H]
		\centering
		\includegraphics[height=2.6in]{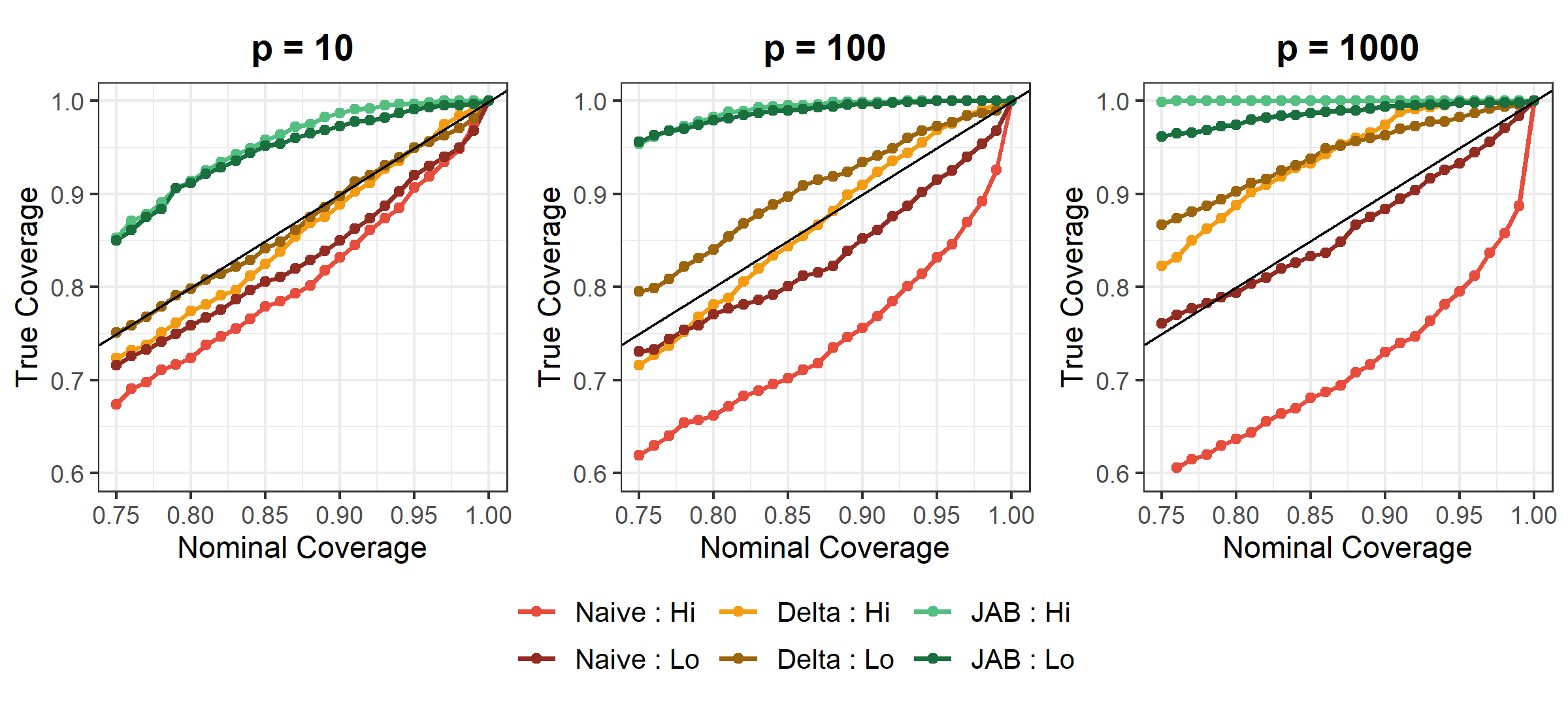}
		\caption{\em Coverage plots under different settings for Classification. In each setting, the coverage is computed over $R = 1000$ replicates.  The Random Forest algorithm is fixed to use $B = 3000$ trees.  In this example $n = 110$ and $p$ varies, with the signal-to-noise ratio fixed to be $2.00$.  The true error rate is estimated using a test set of size $n_{\rm test} = 11k$.  
			\label{fig:class_cov}	}
		
	\end{figure}

	\begin{figure}[H]
		\centering
		\includegraphics[height=2.6in]{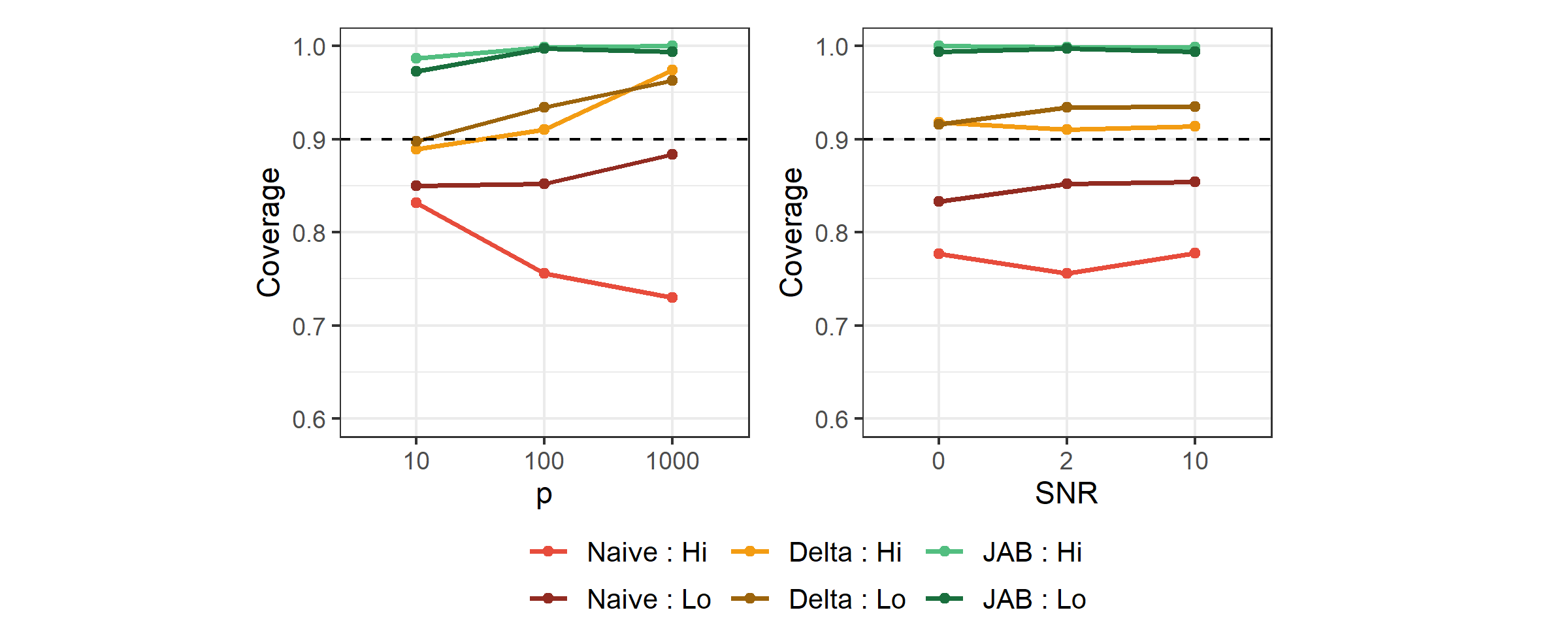}
		\caption{\em Trends across varying parameters $p$, $B$, and SNR for Classification. In each setting, $n = 110$ and the coverage is computed over $R = 1000$ replicates.  
			The true error rate is estimated using a test set of size $n_{\rm test} = 11k$.  Left to right: (1) $B = 3000$, SNR $ = 2.0$, and varying $p$; and (2) $p = 100$, $B = 3000$, and varying SNR
			\label{fig:class_lines}	}	
	\end{figure}

	\section{Conclusion} \label{sec:conclusion}

	We have proposed two new methods for constructing confidence intervals for  the test error in Random Forests, and  demonstrated their utility on real and simulated datasets.  These new intervals have better coverage properties than the na\"ive interval and can be computed with no additional resampling or growing of trees.
	
	We provide R code for implementing the new confidence intervals in Appendix \ref{app:code}, as well as on GitHub at \texttt{https://github.com/RSamyak/oobdelta\_results/}.  We provide reference code for the R packages \texttt{randomForest} and \texttt{ranger}, but the code works with any package that produces individual predictions for each observation from each tree in the Random Forest.

	\section*{Acknowledgements}
	We would like to thank Sourav Chatterjee and Stefan Wager for very helpful conversations.  
	Trevor Hastie was partially supported by grants DMS-2013736 And IIS
	1837931 from the National Science Foundation, and grant 5R01 EB
	001988-21 from the National Institutes of Health.  
	Robert Tibshirani was supported by grant 5R01 EB 001988-16 from the National Institutes of Health and grant DMS1208164 from the National Science Foundation.
	
	\bibliography{fork-ref}
	
	\appendix
	
	\section{Proofs} \label{app:proofs}
		
	\begin{proof}[Proof of Theorem \ref{lem:Di}]
	
	We first observe
	\begin{equation}
	S\l(\F_\epsi\r) =
	 \sum_{j=1}^{n}{
	 	f_\epsi(j)
	 	\l(y_j-\frac{\sum_{b}{
	 			{\widehat{y_j}}^{(b)}I_j^{(b)}g_{\varepsilon,i}(b)}}
 			{\sum_{b}\ I_j^{\left(b\right)}g_{\varepsilon,i}(b)}\right)^2}
	\end{equation}
	where $f_\epsi$ and $g_\epsi$ are the probability mass functions of the observations and the bootstrap samples respectively:
	\begin{align}
		f_{\varepsilon,i}\left(j\right) &=
		\left[1+\left(n\delta_{ij}-1\right)\varepsilon\right]\cdot f_0(j) \quad \text{where}~
		f_0\left(j\right)=\frac{1}{n}\\
		g_{\varepsilon,i}\left(b\right)&=
		\left[\left(1-\varepsilon\right)^n\left(1+\frac{n\varepsilon}{1-\varepsilon}\right)^{N_i^{(b)}}\right]\cdot g_0(b)
		\quad \text{where}~
		g_0\left(b\right)=\frac{1}{n^n}
	\end{align}
	Here $\delta_{ij}$ is the Kronecker delta $\One{(i=j)}$.
	
	We will need the partial derivatives of $f_\epsi$ and $g_\epsi$ at $\varepsilon = 0$:
	\begin{align}
		\dfrac{\partial f_{\varepsilon,i}\left(j\right)}{\partial \varepsilon}|_{\varepsilon = 0} &= 
		\left(\delta_{ij}-\frac{1}{n}\right) \\
		\dfrac{\partial g_{\varepsilon,i}\left(b\right)}{\partial \varepsilon}|_{\varepsilon = 0} &=
		n\left(N_i^{\left(b\right)}-1\right)g_0(b)
	\end{align}
	
	We calculate the empirical influence function using the product rule
	\begin{equation}
		{\hat{D}}_i=\frac{1}{n}\frac{\partial S(\F_\epsi)}{\partial\varepsilon}|_{\varepsilon=0}={\rm I+II}
	\end{equation}
	
	where
	\begin{align*}
		{\rm I} &= 
		\dfrac{1}{n}\sum_{j}{\dfrac{\partial f_\epsi(j)}{\partial \varepsilon} 
		{\l(y_j-\frac{\sum_{b}{
				{\widehat{y_j}}^{(b)}I_j^{(b)}g_{\varepsilon,i}(b)}}
		{\sum_{b}\ I_j^{\left(b\right)}g_{\varepsilon,i}(b)}\right)^2} }|_{\varepsilon = 0}
		\\ &=\frac{1}{n}\sum_{j}{\l(\delta_{ij}-\frac{1}{n}\right)\l(y_j-{\widehat{y_j}}^{OOB}\right)^2} \\
		&=\frac{1}{n}\l\{\l(y_i-{\widehat{y_i}}^{OOB}\right)^2-\dfrac{1}{n}\sum_j\l(y_j-{\widehat{y_j}}^{OOB}\right)^2\r\}
	\end{align*}
	
	and 
	\begin{align*}
		{\rm II} &=
		\dfrac{1}{n}\sum_{j}{f_\epsi(j)} \dfrac{\partial}{\partial\varepsilon}
			{\l(y_j-\frac{\sum_{b}{
						{\widehat{y_j}}^{(b)}I_j^{(b)}g_{\varepsilon,i}(b)}}
				{\sum_{b}\ I_j^{\left(b\right)}g_{\varepsilon,i}(b)}\r)^2} |_{\varepsilon = 0} \\
		&=
		-\frac{2}{n^2}\sum_{j}\l(y_j-{\widehat{y_j}}^{\OOB}\right)\l\{\frac{\sum_{b}{{\widehat{y_j}}^{(b)}I_j^{(b)}}n\l(N_i^{\l(b\r)}-1\r)g_0(b)\mathrm{\ }\ }{e_n^{-1}}-{\widehat{y_j}}^{\OOB}\frac{\sum_{b}\ I_j^{(b)}n\l(N_i^{\l(b\r)}-1\r)g_0(b)}{e_n^{-1}}\r\} \\
		&=-\frac{2e_n}{n}\sum_{j}{\l(y_j-{\widehat{y_j}}^{\OOB}\right)\cdot\l\{\frac{1}{B}\sum_{b}{\l(N_i^{\l(b\r)}-1\r)I_j^{\l(b\r)}\l({\widehat{y_j}}^{\l(b\r)}-{\widehat{y_j}}^{\OOB}\right)}\right\}}
	\end{align*}
	
	We combine them to get 
	\begin{multline}
		{\hat{D}}_i= {\rm I+II}
		=\frac{1}{n}\l\{\l(y_i-{\widehat{y_i}}^{\OOB}\right)^2-\dfrac{1}{n}\sum_j\l(y_j-{\widehat{y_j}}^{\OOB}\right)^2\r\} \\
		-\frac{2e_n}{n}\sum_{j}{\l(y_j-{\widehat{y_j}}^{\OOB}\right)\cdot\l\{\frac{1}{B}\sum_{b}{\l(N_i^{\l(b\r)}-1\r)I_j^{\l(b\r)}\l({\widehat{y_j}}^{\l(b\r)}-{\widehat{y_j}}^{\OOB}\right)}\right\}}
	\end{multline}
	
	\end{proof}
	
	\begin{proof}[Proof of Theorem \ref{lem:arb_loss}]
	Follows in exactly the same fashion as that of Theorem \ref{lem:Di}, replacing the square loss term $\l(y_i-{\widehat{y_i}}^{\OOB}\right)^2$ by $\ell\l(y_i, {\widehat{y_i}}^{\OOB} \r)$.
	\end{proof}
	
	\newpage
	\section{R code for delta-method-after-bootstrap and jackknife-after-bootstrap SE}

	\label{app:code}
	
	\begin{Verbatim}

reduce_function <- einsum::einsum_generator('ib,jb->ib')

mean.sq.diff <- function(u) {
	r <- length(u)
	
	# u <- u[!is.na(u)]
	u[is.na(u)] <- 0
	
	mu <- mean(u, na.rm = TRUE)
	
	ret <- sum((u - mu) ** 2, na.rm = TRUE)
	ret <- ret * (r - 1) / r
	ret <- sqrt(ret)
	return(ret)
}

oobsd_delta_raw.ranger <- function(fit, x = NULL, y, tree_error = NULL, ...) {
	if (is.null(fit$inbag)) {
		stop("fit does not contain inbag \n
		Please run ranger with keep.inbag = TRUE")
	}
	
	inbag <- matrix(unlist(fit$inbag), ncol = fit$num.trees)
	
	if(is.null(tree_error)){
		if(is.null(x)) stop("need either x or tree_error")
		
		data <- data.frame(x, y)
		
		tree_error <-
		(predict(fit,data=data,predict.all=TRUE)$predictions - as.vector(y))
		tree_error[inbag != 0] <- 0
		
	}
	
	
	
	
	sd_delta_internal(tree_error, inbag, fit$predictions, as.vector(y))
}

oobsd_delta_raw.randomForest <- function(fit, x = NULL, y = NULL, tree_error = NULL, ...) {
	if (is.null(fit$inbag)) {
		stop("fit does not contain inbag \n
		Please run randomForest with keep.inbag = TRUE")
	}
	
	if(is.null(y)){
		y <- as.vector(fit$y)
	}
	
	if(is.null(tree_error)){
		if(is.null(x)) stop("need either x or tree_error")
		
		if (is.null(fit$forest))
		stop(
		"fit object does not contain forest! \n
		Please run randomForest(...) with keep.forest = TRUE \n
		and keep.inbag = TRUE"
		)
		
		tree_error <-
		(predict(fit, x, predict.all = TRUE)$individual - y)
		
		tree_error[fit$inbag != 0] <- 0
		
	}
	
	sd_delta_internal(tree_error, fit$inbag, fit$predicted, y)
}


oobsd_delta.ranger <- function(fit, x, y, ...) {
	pmax(oobsd_delta_raw.ranger(fit, x, y, ...),
	oobsd_naive.ranger(fit, y = y, ...))
}

oobsd_delta.randomForest <- function(fit, x, ...) {
	pmax(oobsd_delta_raw.randomForest(fit, x, ...),
	oobsd_naive.randomForest(fit, x, ...))
}


oobsd_naive.ranger <- function(fit, y, ...) {
	sd((fit$predictions - y) ** 2) / sqrt(length(y))
}

oobsd_naive.randomForest <- function(fit, ...) {
	sd((fit$predicted - fit$y) ** 2) / sqrt(length(fit$y))
}


sd_delta_internal <- function(tree_error, inbag, average_prediction, y){
	
	N <- nrow(inbag)
	B <- ncol(inbag)
	
	in_sample <- !(inbag == 0)
	sum_in_sample <- apply(in_sample, 1, sum)
	sum_out_of_bag <- B - sum_in_sample
	
	
	## This is equal to sapply(1:N, function(i){mean(tree_error[i, inbag[i,]==0])})
	average_error <- y - average_prediction
	
	multiplication_ratio <- sum_out_of_bag / (sum_out_of_bag - 1)
	
	
	mse_oob <- mean(average_error ** 2)
	
	approx_e <- function(N) {
		exp(-N * log(1 - 1 / N))
	}
	eN <- approx_e(N)
	
	
	tree_minus_average <- (-1) * (tree_error - average_error)
	
	N_average <- apply(inbag, 1, mean)
	
	inbag_minus_N_average <- inbag - N_average
	
	
	Di_I <- (average_error ** 2 - mse_oob) / N
	
	cross_term <- average_error * tree_minus_average * in_sample
		
	
	reduced <- reduce_function(inbag_minus_N_average, cross_term)
	Di_II <- (-1) * 2 * eN / B * apply(reduced, 1, sum) / N
		
	Di <- Di_I + Di_II
	
	SEhat.noadj <- sqrt(sum(Di ** 2))
	

	
	return(SEhat.noadj)
}


oobsd_jack.ranger <- function(fit, x = NULL, y, all_preds = NULL, ...) {
	if (is.null(fit$inbag)) {
		stop("fit does not contain inbag \n
		Please run ranger with keep.inbag = TRUE")
	}
	
	inbag <- matrix(unlist(fit$inbag), ncol = fit$num.trees)
	
	if(is.null(all_preds)){
		if(is.null(x)) stop("need either x or all_preds")
		
		data <- data.frame(x, y)
		
		all_preds <-
		predict(fit,data=data,predict.all=TRUE)$predictions
		
	}
	
	sd_jack_internal(all_preds, inbag, fit$predictions, as.vector(y))
}

oobsd_jack.randomForest <- function(fit, x = NULL, y = NULL, all_preds = NULL, ...) {
	if (is.null(fit$inbag)) {
		stop("fit does not contain inbag \n
		Please run randomForest with keep.inbag = TRUE")
	}
	
	if(is.null(y)){
		y <- as.vector(fit$y)
	}
	
	if(is.null(all_preds)){
		if(is.null(x)) stop("need either x or all_preds")
		
		if (is.null(fit$forest))
		stop(
		"fit object does not contain forest! \n
		Please run randomForest(...) with keep.forest = TRUE \n
		and keep.inbag = TRUE"
		)
		
		all_preds <-
		predict(fit, x, predict.all = TRUE)$individual
		
		
	}
	
	sd_jack_internal(all_preds, fit$inbag, fit$predicted, y)
}


sd_jack_internal <- function(predmat, inbag, average_prediction = NULL, y){
	
	n <- nrow(inbag)
	B <- ncol(inbag)
	oob.predmat <- matrix(0,n,n)
	oob.n <- oob.predmat
	oobstats_list <- list()
	for(b in 1:B){
		oob <- inbag[, b] == 0
		## each column i represents the oob predictions for elements j with i out as well. (except on the diagonal)
		oob.predmat[oob,oob] <- oob.predmat[oob,oob] + predmat[oob,b]# last vector gets recycled
		oob.n[oob,oob] <- oob.n[oob,oob] + 1
		
	}
	oob.predmat <- oob.predmat/oob.n
	## The diagonal is the OOB prediction for an observation
	## In column i, the off-diagonal elements are the OBB predictions for those elements, in forests fit without i
	
	errs <- (y-oob.predmat)^2
	oob.error <- mean(diag(errs))
	oobi.error <- (colSums(errs)-diag(errs))/(n-1)
	
	sejack <- sqrt(((n-1)*(n-1)/n)*var(oobi.error))
	
	sejack
}

	\end{Verbatim}

\newpage
\section{Extended Simulations} \label{app:results-ext}

We present results for regression analogous to Table \ref{tab:results_reg} with transformed intervals in Tables \ref{tab:results_reg_log} and \ref{tab:results_reg_sqrt}. 

\begin{table}[H]
	\begin{center}
		\begin{tabular}{|| *{11}{c}|| }
			\hline
			\multicolumn{3}{||c}{Setting}    
			& \multicolumn{2}{c}{Mean}
			& \multicolumn{3}{c}{Mean CI Width}
			& \multicolumn{3}{c||}{Miscoverage}             \\
			$n$ & $p$ & SNR
			& $\ErrOOB$ & Truth
			& Na\"ive & Delta & JAB
			& {Na\"ive}
			& {Delta}
			& {JAB}                \\ [0.5ex]
			\hline
			\hline
			110 & 10 & 0 & 1.1 & 1.1 & .48 & .55 & .63 & \tcr{16.4\%} & \tcr{10.8\%} & 6.9\% \\ 
			\hline
			110 & 100 & 0 & 1.0 & 1.0 & .45 & .53 & .56 & \tcr{12.6\%} & 8.3\% & 5.7\% \\ 
			\hline
			110 & 1000 & 0 & 1.0 & 1.0 & .45 & .52 & .53 & \tcr{13.5\%} & 8.5\% & 7.3\% \\ 
			\hline
			110 & 10 & 2 & 76 & 75 & 33 & 38 & 40 & \tcr{13.1\%} & 8.8\% & 8.2\% \\ 
			\hline
			110 & 100 & 2 & 1147 & 1142 & 507 & 586 & 597 & \tcr{10.9\%} & 7.1\% & 6.5\% \\ 
			\hline
			110 & 1000 & 2 & 12421 & 12450 & 5486 & 6324 & 6490 & \tcr{14.0\%} & 8.7\% & 7.5\% \\ 
			\hline
			110 & 10 & 10 & 42 & 42 & 19 & 21 & 21 & \tcr{13.2\%} & 9.3\% & 9.5\% \\ 
			\hline
			110 & 100 & 10 & 804 & 803 & 355 & 411 & 411 & \tcr{12.5\%} & 8.3\% & 8.6\% \\ 
			\hline
			110 & 1000 & 10 & 9062 & 9079 & 3998 & 4607 & 4712 & \tcr{12.8\%} & 9.1\% & 8.4\% \\ 
			\hline
			\hline
		\end{tabular}
		
	\end{center}
	
	\caption{\em Simulation results for Regression, with intervals using a log transformation.  In each setting, the coverage is computed over $R = 1000$ replicates.  CIs are generated with 10\% nominal miscoverage. \label{tab:results_reg_log}}
	
\end{table}

\begin{table}[H]
	\begin{center}
		\begin{tabular}{|| *{11}{c}|| }
			\hline
			\multicolumn{3}{||c}{Setting}    
			& \multicolumn{2}{c}{Mean}
			& \multicolumn{3}{c}{Mean CI Width}
			& \multicolumn{3}{c||}{Miscoverage}             \\
			$n$ & $p$ & SNR
			& $\ErrOOB$ & Truth
			& Na\"ive & Delta & JAB
			& {Na\"ive}
			& {Delta}
			& {JAB}                \\ [0.5ex]
			\hline
			\hline
			110 & 10 & 0 & 1.1 & 1.1 & .47 & .54 & .62 & \tcr{16.2\%} & \tcr{11.0\%} & 7.2\% \\ 
			\hline
			110 & 100 & 0 & 1.0 & 1.0 & .45 & .52 & .55 & \tcr{12.9\%} & 8.8\% & 6.3\% \\ 
			\hline
			110 & 1000 & 0 & 1.0 & 1.0 & .44 & .51 & .53 & \tcr{14.2\%} & 8.3\% & 7.6\% \\ 
			\hline
			110 & 10 & 2 & 76 & 75 & 33 & 38 & 39 & \tcr{13.2\%} & 8.1\% & 8.1\% \\ 
			\hline
			110 & 100 & 2 & 1147 & 1142 & 503 & 580 & 591 & \tcr{11.7\%} & 6.9\% & 6.2\% \\ 
			\hline
			110 & 1000 & 2 & 12421 & 12450 & 5441 & 6255 & 6417 & \tcr{14.2\%} & 8.6\% & 7.6\% \\ 
			\hline
			110 & 10 & 10 & 42 & 42 & 18 & 21 & 20 & \tcr{13.2\%} & 8.8\% & 9.5\% \\ 
			\hline
			110 & 100 & 10 & 804 & 803 & 352 & 406 & 406 & \tcr{13.5\%} & 8.8\% & 8.1\% \\ 
			\hline
			110 & 1000 & 10 & 9062 & 9079 & 3965 & 4556 & 4660 & \tcr{13.1\%} & 9.1\% & 8.0\% \\ 
			\hline
			\hline
		\end{tabular}
		
	\end{center}
	
	\caption{\em Simulation results for Regression, with intervals using a square-root transformation.  In each setting, the coverage is computed over $R = 1000$ replicates.  CIs are generated with 10\% nominal miscoverage. \label{tab:results_reg_sqrt}}
	
\end{table}

\newpage

We present results for classification analogous to Table \ref{tab:results_class} with transformed intervals in Tables \ref{tab:results_class_log} and \ref{tab:results_class_sqrt}. 

\begin{table}[H]
	\begin{center}
		\begin{tabular}{|| *{11}{c}|| }
			\hline
			\multicolumn{3}{||c}{Setting}    
			& \multicolumn{2}{c}{Mean}
			& \multicolumn{3}{c}{Mean CI Width}
			& \multicolumn{3}{c||}{Miscoverage}             \\
			$n$ & $p$ & SNR
			& $\ErrOOB$ & Truth
			& Na\"ive & Delta & JAB
			& {Na\"ive}
			& {Delta}
			& {JAB}                \\ [0.5ex]
			\hline
			\hline
			110 & 10 & 0 & 0.27 & 0.27 & .04 & .06 & .10 & \tcr{26.0\%} & \tcr{14.8\%} & 0.7\% \\ 
			\hline
			110 & 100 & 0 & 0.26 & 0.26 & .02 & .04 & .08 & \tcr{28.0\%} & 6.7\% & 0.0\% \\ 
			\hline
			110 & 1000 & 0 & 0.25 & 0.25 & .02 & .03 & .07 & \tcr{25.0\%} & 1.8\% & 0.1\% \\ 
			\hline
			110 & 10 & 2 & 0.19 & 0.19 & .05 & .06 & .09 & \tcr{18.3\%} & \tcr{10.7\%} & 0.9\% \\ 
			\hline
			110 & 100 & 2 & 0.24 & 0.24 & .02 & .04 & .08 & \tcr{25.3\%} & 6.3\% & 0.0\% \\ 
			\hline
			110 & 1000 & 2 & 0.25 & 0.25 & .02 & .03 & .07 & \tcr{27.1\%} & 1.7\% & 0.2\% \\ 
			\hline
			110 & 10 & 10 & 0.16 & 0.16 & .04 & .05 & .08 & \tcr{21.2\%} & \tcr{13.2\%} & 1.1\% \\ 
			\hline
			110 & 100 & 10 & 0.23 & 0.23 & .02 & .03 & .08 & \tcr{25.7\%} & 7.3\% & 0.1\% \\ 
			\hline
			110 & 1000 & 10 & 0.25 & 0.25 & .02 & .03 & .07 & \tcr{25.4\%} & 1.2\% & 0.0\% \\ 
			\hline
			\hline
		\end{tabular}
		
	\end{center}
	
	\caption{\em Simulation results for Classification, with intervals using a log transformation.  In each setting, the coverage is computed over $R = 1000$ replicates.  CIs are generated with 10\% nominal miscoverage. \label{tab:results_class_log}}
	
\end{table}

\begin{table}[H]
	\begin{center}
		\begin{tabular}{|| *{11}{c}|| }
			\hline
			\multicolumn{3}{||c}{Setting}    
			& \multicolumn{2}{c}{Mean}
			& \multicolumn{3}{c}{Mean CI Width}
			& \multicolumn{3}{c||}{Miscoverage}             \\
			$n$ & $p$ & SNR
			& $\ErrOOB$ & Truth
			& Na\"ive & Delta & JAB
			& {Na\"ive}
			& {Delta}
			& {JAB}                \\ [0.5ex]
			\hline
			\hline
			110 & 10 & 0 & 0.27 & 0.27 & .04 & .06 & .10 & \tcr{26.4\%} & \tcr{14.8\%} & 0.7\% \\ 
			\hline
			110 & 100 & 0 & 0.26 & 0.26 & .02 & .04 & .08 & \tcr{27.7\%} & 7.0\% & 0.0\% \\ 
			\hline
			110 & 1000 & 0 & 0.25 & 0.25 & .02 & .03 & .07 & \tcr{25.2\%} & 2.0\% & 0.1\% \\ 
			\hline
			110 & 10 & 2 & 0.19 & 0.19 & .05 & .06 & .09 & \tcr{17.0\%} & \tcr{10.6\%} & 1.0\% \\ 
			\hline
			110 & 100 & 2 & 0.24 & 0.24 & .02 & .04 & .08 & \tcr{25.3\%} & 6.0\% & 0.0\% \\ 
			\hline
			110 & 1000 & 2 & 0.25 & 0.25 & .02 & .03 & .07 & \tcr{27.0\%} & 1.8\% & 0.2\% \\ 
			\hline
			110 & 10 & 10 & 0.16 & 0.16 & .04 & .05 & .08 & \tcr{21.4\%} & \tcr{12.8\%} & 1.3\% \\ 
			\hline
			110 & 100 & 10 & 0.23 & 0.23 & .02 & .03 & .08 & \tcr{25.6\%} & 7.3\% & 0.1\% \\ 
			\hline
			110 & 1000 & 10 & 0.25 & 0.25 & .02 & .03 & .07 & \tcr{25.0\%} & 1.3\% & 0.0\% \\ 
			\hline
			\hline
		\end{tabular}
		
	\end{center}
	
	\caption{\em Simulation results for Classification, with intervals using a square-root transformation.  In each setting, the coverage is computed over $R = 1000$ replicates.  CIs are generated with 10\% nominal miscoverage. \label{tab:results_class_sqrt}}
	
\end{table}

\end{document}